\newcommand{\zbm}{\textbf{0}}
\newcommand\norm[1]{\left\lVert#1\right\rVert}
\newcommand{\RNum}[1]{\uppercase\expandafter{\romannumeral #1\relax}}
\newtheorem{theorem}{{\bf Theorem}}
\newcommand{\qed}{\nobreak \ifvmode \relax \else
  \ifdim\lastskip<1.5em \hskip-\lastskip
  \hskip1.5em plus0em minus0.5em \fi \nobreak
  \vrule height0.75em width0.5em depth0.25em\fi}
\newcounter{step}
\newlength{\totlinewidth}
  {\end{list}%
  \rule{\linewidth}{1pt}}
\newcounter{substep}
\newlength{\aligntop}
\newlength{\alignbot}
\begin{document}

\title{Asynchronous Downlink Massive MIMO Networks: A Stochastic Geometry Approach}

\author{Elahe Sadeghabadi, 
Seyed Mohammad Azimi-Abarghouyi, 
Behrooz Makki, 
Masoumeh Nasiri-Kenari,~\IEEEmembership{Senior Member, IEEE}
\thanks{E. Sadeghabadi, S.M. Azimi-Abarghouyi, and M. Nasiri-Kenari
are with Electrical Engineering Department, Sharif University of Technology, Tehran, Iran. Emails: \textit{sadeghabadi108@yahoo.com}, \textit{azimi$\_$sm@ee.sharif.edu}, and \textit{mnasiri@sharif.edu}. 
B. Makki is with Department of Electrical Engineering, Chalmers University of Technology, Gothenburg, Sweden, Email: \textit{behrooz.makki@chalmers.se}
}
}

\maketitle
\begin{abstract}
Massive multiple-input multiple-output (MIMO) is recognized as a promising technology for the next generation of wireless networks because of its potential to increase the spectral efficiency. In initial studies of massive MIMO, the system has been considered to be perfectly synchronized throughout the entire cells. However, perfect synchronization may be hard to attain in practice. Therefore, we study a massive MIMO system whose cells are not synchronous to each other, while transmissions in a cell are still synchronous. We analyze an asynchronous downlink massive MIMO system in terms of the coverage probability and the ergodic rate by means of the stochastic geometry tool. For comparison, we also obtain the results for the synchronous systems. In addition, we investigate the effect of the uplink power control and the number of pilot symbols on the downlink ergodic rate, and we observe that there is an optimal value for the number of pilot symbols maximizing the downlink ergodic rate of a cell. Our results also indicate that, compared to the cases with synchronous transmission, the downlink ergodic rate is more sensitive to the uplink power control in the asynchronous mode.
\end{abstract}
\IEEEpeerreviewmaketitle
\section{Introduction}
The need for a higher data rate is getting a vital factor in the next generation of wireless networks. According to \cite{will}, a solution for supporting high data rates is to increase the spectral efficiency through advances in multiple-input multiple-output (MIMO) systems. 
Marzetta, in his seminal article \cite{noncoop}, introduced massive MIMO as a promising technology that significantly increases the spectral efficiency. In \cite{noncoop}, a perfectly synchronized massive MIMO system is considered and it is shown that the synchronous assumption is the worst possible case from the standpoint of the so-called pilot contamination phenomenon. However, as mentioned in \cite{three}, time-synchronous transmission is hard to attain over a large coverage area. 
In addition, the worst case, in terms of the pilot contamination, does not necessarily lead to the lowest ergodic rate, because the inter-cellular interference is negligible in the limit of an infinite number of antennas \cite{noncoop}, however for a large but finite number of antennas, the inter-cellular interference is also important.
Thus, it is  interesting to analyze an asynchronous massive MIMO system.

Asynchrony is addressed in \cite{three}, where an uplink massive MIMO system is analyzed. In \cite{three}, it is assumed that transmissions in each cell are synchronous, while pilot and uplink data transmissions in different cells are asynchronous. The analysis in \cite{three} indicates that the synchrony or asynchrony has no impact on the uplink transmission performance. 
In \cite{khan,DLcapacity,precoding}, the downlink direction of the synchronous massive MIMO systems is analyzed in terms of the achievable rate, the pilot contamination problem, and efficient precoding designs. The authors in \cite{khan} derive the achievable rate of the system for both maximum ratio combining (MRC) and zero forcing (ZF) precoders. 
In addition, \cite{DLcapacity} analyzes the downlink user capacity under the pilot-contaminated scenario. 
Finally, in \cite{precoding}, a new multi-cell minimum mean square error (MMSE) based precoding method is proposed that mitigates the pilot contamination problem.

Stochastic geometry is a powerful tool to evaluate the performance of large scale networks \cite{Haeng,Azimi1,Azimi2}. Here, it is assumed that the base stations are distributed randomly. The authors of \cite{tract} showed that the approach of using randomly distributed base stations is not only more tractable for system analysis but also as accurate as a grid model. In the literature, stochastic geometry has been rarely considered for the performance evaluation of synchronous massive MIMO systems \cite{emil,gil,heath,Madh,interference}. In these works, base stations are assumed to be distributed according to a homogeneous Poisson point process (HPPP) \cite{Haeng}. 
Particularly, \cite{emil} maximizes the uplink energy efficiency with respect to different system parameters. 
In \cite{gil}, stochastic geometery is used to develop a mathematical framework for computing the coverage probability and the ergodic rate. 
In addition to assuming an HPPP for the base stations, \cite{heath} models the distribution of the users with the same pilot sequence except for the desired user by an HPPP outside a ball centered at the desired base station location, i.e., an exclusion ball. In contrast, \cite{Madh} and \cite{interference} consider the coverage area of each cell as a circle around each base station, with possible overlap among the areas of adjacent cells.

The main contributions of our paper are as follows.

\textbf{Asynchronous and synchronous downlink massive MIMO modeling and analysis:} In this paper, we analyze a massive MIMO system, whose cells are not synchronous, while the transmissions in each cell are still synchronous. 
In order to study an asynchronous downlink massive MIMO system, we first estimate the channel coefficients and compute the downlink signal-to-interference-plus-noise ratio (SINR). Then, the coverage probability and the ergodic rate are derived by using stochastic geometry. 
Moreover, we analyze massive MIMO systems in the synchronous mode and compare the results with those achieved in the asynchronous mode. 
Our results are presented in the cases with an exclusion ball model as in \cite{heath}. We also consider fractional power control in the uplink transmission to compensate for a fraction of the path loss and to mitigate the near-far problem from intra-cell interference.

\textbf{Distribution of distances:} We derive the distributions of various distances, which play key roles in coverage probability and the ergodic rate analyses. In this context, we will acquire the probability distribution function (PDF) of three types of distances, namely, \textit{i)} the distance between a user and its serving base station, 
\textit{ii)} the distance between a user and its serving base station given the distance between the same user and another arbitrary base station, and
 \textit{iii)} the distance between a user and its serving base station given the distance between another user and its serving base station as well as the distance between these two users.

\textbf{System design insights:} Through simulation evaluations, we validate the analytical results and derive the downlink ergodic rate of a cell as a function of the uplink power control parameter and the number of pilot symbols. We observe that there are optimal values for these parameters maximizing the downlink ergodic rate of the cell. 

Our results indicate that using uplink full power control in the asynchronous mode leads to zero downlink rate. In addition, in most considered cases, we observe higher downlink ergodic rate in the synchronous case, compared to the cases with asynchronous transmission. Hence, the synchronous assumption is not necessarily the worst possible scenario, in terms of ergodic rate.

The paper is organized as follows. In Section \ref{sys_model_sec}, the system model is given. Section \ref{ch_est_sec} presents the channel estimation procedure. Section \ref{sto_sec} analyzes the downlink massive MIMO system using stochastic geometry and derives closed-form expressions for the coverage probability and the ergodic rate. Simulation results and discussions are outlined in Section \ref{sim_res}. Finally, Section \ref{conclude_sec} concludes the paper.

The following notations are used through the paper. Bold lower-case letters denote vectors, and bold upper-case letters represent matrices. $\mathbb{E}\{\cdot\}$, $\mathbb{P}\{\cdot\}$, $f\left(\cdot\right)$, $\delta\left(\cdot,\cdot\right)$, and $\mathbb{I}\left(\cdot\right)$ denote the expectation, the probability, the PDF, the Kronecker delta, and the indicator functions, respectively. In addition, the Euclidean norm is represented by $\norm{\cdot}$, and the absolute value is denoted by $\left|\cdot\right|$. We use $\mathbf{X}^*$, $\mathbf{X}^T$, $\mathbf{X}^H$ to represent the conjugate,  the transpose, and the Hermitian transpose of $\mathbf{X}$, respectively. $\mathbf{I}_M$ stands for the $M\times M$ identity matrix. We use $\mathbb{C}$ to represent the sets of all complex-valued numbers. Finally, we use $\mathcal{CN}\left(\cdot,\cdot\right)$ to denote a multi-variate circularly-symmetric complex Gaussian distribution.

\section{System Model}\label{sys_model_sec}
We consider a cellular network, operating under 6 GHz band, with one base station in each cell. Each base station has an antenna array with $M$ antennas simultaneously scheduling $K< M$ single-antenna users. Figure \ref{sys_model} shows the system model for $K=2$. 
\begin{figure}[!t]
\centering
\includegraphics[width=7cm,height=6cm]{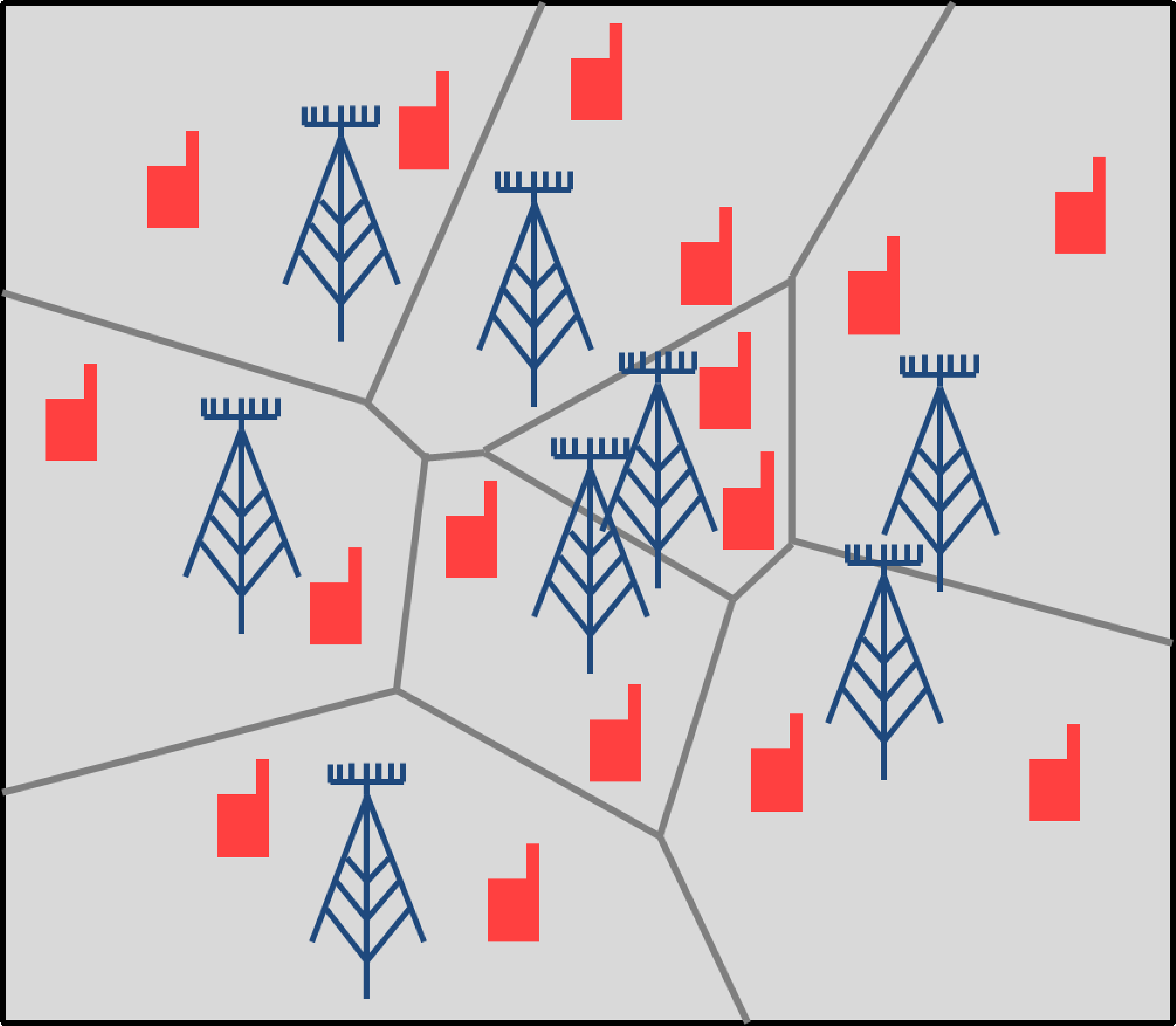}
\caption{The system model for $K=2$. Each base station has $M$ antennas, while users have one antenna.}\label{sys_model}
\end{figure}
The channel between each user and its serving base station is assumed to remain constant during a coherence time interval, denoted by $T_\text{c}$, which is equivalent to $N_\text{tot}$ symbols transmissions. In addition, the channel model is considered frequency-nonselective.
 
It is assumed that the system uses time-division duplexing (TDD) for transmission in the uplink and downlink directions. Thus, it is sufficient to estimate the channel vector in one direction. The channel vector is estimated using pilot sequences. In massive MIMO, it is usually assumed that users send the pilot sequences and the base stations estimate users' channel vectors. Different and orthogonal pilot sequences are assigned to users of each cell. The pilot sequence of the $k$-th user is denoted by $\boldsymbol{\varphi}_k\in \mathbb{C}^{N_\text{p}\times 1}$, each element with magnitude of one. Therefore, we have 
\begin{eqnarray}
\boldsymbol{\varphi}_k^H\boldsymbol{\varphi}_l=N_\text{p}\delta(l,k).
\end{eqnarray}
Due to the coherence time limitation, $N_\text{p}$ cannot be large. Hence, we consider the same set of orthogonal sequences in all cells, i.e., user $k$ of each cell, uses the same pilot sequence. Since the maximum number of mutually orthogonal sequences of the length $N_\text{p}$ is equal to $N_\text{p}$, we assume that $K=N_\text{p}$.

There are three transmission phases during a coherence time interval, namely, pilot transmission (channel estimation), downlink data transmission, and uplink data transmission (see Fig. \ref{sys_model}).
\begin{figure}[!t]
\centering
\includegraphics[width=6.5cm,height=1.6cm]{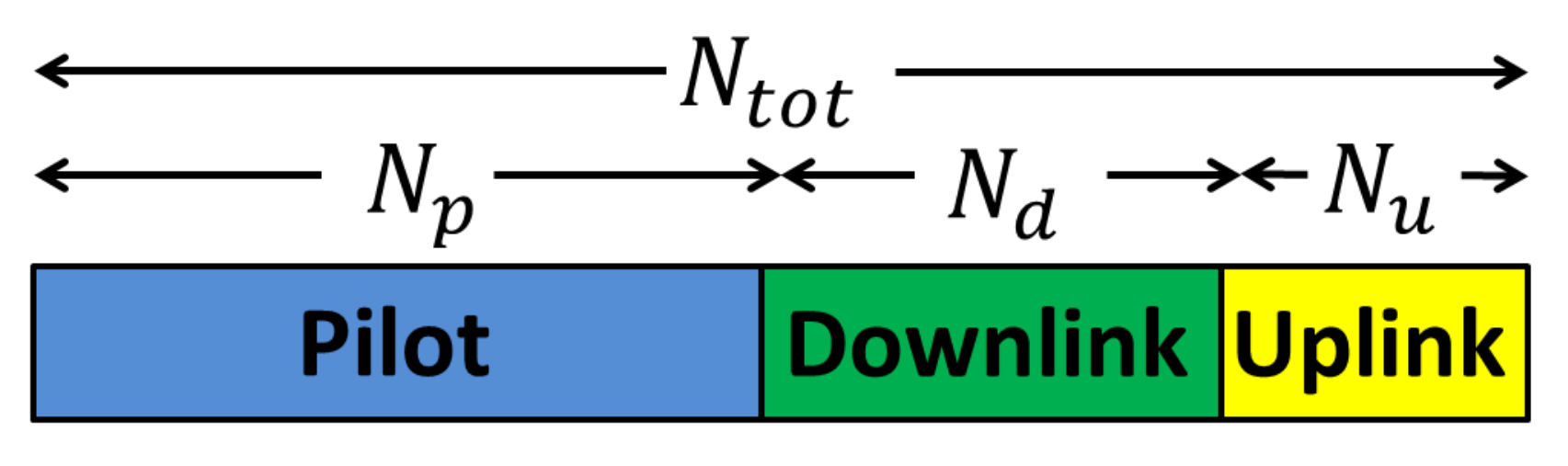}
\caption{The transmission protocol during a coherence time interval.}\label{sys_model}
\end{figure}
 In the first phase, the users transmit the pilot sequences. At the end of pilot transmission phase, the base station estimates the channel vectors of its serving users. Hence, the base station derives the precoding and combining vectors for downlink and uplink data transmission accordingly. Next, in the downlink data transmission, the base station sends the users' downlink data by using the precoding vector for each user derived using the user channel estimation vector. 
Finally, in the last phase, the users transmit their uplink data, and the base station detects the users' data with the help of the combining vectors. The number of downlink and uplink symbols during each coherence block are denoted by $N_\text{d}$ and $N_\text{u}$, respectively. Also, we define $Z=\frac{N_\text{d}}{N_\text{u}}$. In the literature, it is usually assumed that all cells are synchronous \cite{noncoop,khan,DLcapacity,precoding,emil,gil,heath,interference}. We refer to this case as synchronous mode. In contrast, while we assume synchronous transmission within each cell, the users in different cells are asynchronous. We refer to this case as asynchronous mode. In this paper, we analyze both the synchronous and the asynchronous modes.

\subsection{Signaling and Channel Model}
We consider both the small scale fading and the large scale path loss. The large scale path loss depends on the distance between the transmitter and the receivers, and the small scale fading is considered as a complex-Gaussian distributed random variable. 
It is notable that we do not consider the shadowing effect.

There are three types of channels:
\begin{itemize}
\item
\textbf{The channel between a user and a base station:} The channel vector between the $k$-th user of the $j$-th cell and the $l$-th base station is denoted by $\mathbf{h}_{ljk}\in \mathbb{C}^{M\times 1}$ and modeled by
\begin{eqnarray}
\label{wr1}
\mathbf{h}_{ljk}=\sqrt{\beta_{ljk}}\mathbf{g}_{ljk},\,
\beta_{ljk}=\omega r_{ljk}^{-\alpha},\,
\mathbf{g}_{ljk}\sim \mathcal{CN}\left(\zbm,\mathbf{I}_M\right),
\end{eqnarray}
where $\beta_{ljk}$ and $\mathbf{g}_{ljk}$ stand for the large scale path loss and the small scale fading, respectively. In (\ref{wr1}), $r_{ljk}$ denotes the distance between the $k$-th user of the $j$-th cell and $l$-th base station, $\alpha>2$ is the path loss exponent, and $\omega$ stands for the path loss at a reference distance of 1 km.

Since $M$ is large, according to the law of large numbers, we approximately have \cite{noncoop}
\begin{eqnarray}
\frac{1}{M}\mathbf{h}_{ljk}^H\mathbf{h}_{l'tk'}
=\frac{1}{M}\sqrt{\beta_{ljk}\beta_{l'tk'}}\mathbf{g}_{ljk}^H\mathbf{g}_{l'tk'}
\approx\beta_{ljk}\delta(l,l')\delta(j,t)\delta(k,k').
\end{eqnarray}
\item
\textbf{The channel between two base stations:} The channel matrix between the $j$-th base station (transmitter) and the $l$-th base station (receiver) is denoted by $\mathbf{H}_{lj}\in \mathbb{C}^{M\times M}$. Here, the channel is modeled as
\begin{eqnarray}
\mathbf{H}_{lj}=\sqrt{\beta_{lj}}\mathbf{G}_{lj},\quad\beta_{lj}=\omega r_{lj}^{-\alpha},
\end{eqnarray}
where $r_{lj}$ is the distance between the $l$-th base station and the $j$-th one. We let $\beta_{lj}$ and $\mathbf{G}_{lj}$ denote the large scale path loss and the small scale fading, respectively.
\item
\textbf{The channel between two users:} The channel between the $k$-th user of $l$-th cell and the $k'$-th user of the $j$-th cell is a scalar which is denoted by $h_{lkjk'}$ and modeled by
\begin{eqnarray}
h_{lkjk'}=\sqrt{\beta_{lkjk'}}g_{lkjk'},\,
\beta_{lkjk'}=\omega r_{lkjk'}^{-\alpha},\,
g_{lkjk'}\sim\mathcal{CN}(0,1),
\end{eqnarray}
where $r_{lkjk'}$ is the distance between the two users, and $\beta_{lkjk'}$ and $\mathbf{g}_{lkjk'}$ denote the large scale path loss and the small scale fading, respectively.
\end{itemize}

In the downlink phase, the $l$-th base station uses a precoding vector, $\mathbf{w}_{lk}$, in order to transmit data to the $k$-th user of its cell, where 
$\mathbf{w}_{lk}=\frac{\mathbf{u}_{llk}^*}{\norm{\mathbf{u}_{llk}}}.$ 
$\mathbf{u}_{llk}$ is the observation vector of $\mathbf{h}_{llk}$, obtained by multiplying 
the $l$-th base station received signal in pilot transmission phase 
by the $k$-th user's pilot sequence. Base stations transmit the downlink users' data with constant power $P_\text{d}$.

The $l$-th base station uses the combining vector $\overline{\mathbf{h}}_{llk}^H$ to detect the uplink data of the $k$-th user in its cell, where $\overline{\mathbf{h}}_{llk}$ is the linear minimum mean square error (LMMSE) channel estimation of $\mathbf{h}_{llk}$. In the uplink data and pilot transmission phases, users transmit their data using fractional power control, the same as used in the LTE \cite{heath}, \cite{LTE}, i.e., the $k$-th user of the $l$-th cell uses power $P_{lk}=P_\text{u}\beta_{llk}^{-\epsilon}$, where $0\leq\epsilon\leq1$ is the power control parameter, and $P_\text{u}$ is the uplink open loop transmit power. When $\epsilon =0$, there is no power control, and when $\epsilon = 1$, we have full power control.

The uplink and downlink signal of the $k$-th user of the $l$-th cell are denoted by $s_{lk}^\text{u}$ and $s_{lk}^\text{d}$, respectively. We consider $\mathbb{E}\left|s_{lk}^\text{u}\right|^2=\mathbb{E}\left|s_{lk}^\text{d}\right|^2=1$.
\subsection{Spatial Modeling}
Base stations are distributed by HPPP $\mathrm{\Phi}_\text{b}$ of density $\lambda$. Each base station simultaneously schedules $K$ users. It is assumed that a user connects to the nearest base station. Users of the same cell are distributed uniformly and independently over their Voronoi area, with the exclusion of a central disk of radius $r_0$ around their base station. Let's assume the $k$-th user of the $l$-th cell as the desired user.  Then, the $k$-th users of all cells except for the desired user are approximately distributed by the exclusion ball model introduced in \cite{tract}. According to the exclusion ball model, such users are distributed by the Poission point process (PPP) $\mathrm{\Phi}_{lk}^\text{u}$ of density 
$\lambda_1= \lambda\mathbb{I}(r>R_\text{e})$, 
which denotes that the interfering users are distributed by HPPP of density $\lambda$ outside a central disk of radius $R_\text{e}$ around the base station of the desired cell. As in \cite{heath}, we consider $R_\text{e}=\left(\pi\lambda\right)^{-\frac{1}{2}}$, whereby the average number of excluded users from the HPPP in the exclusion ball model is equal to 1.
\section{Channel Estimation}\label{ch_est_sec}
Channel vector is estimated using pilot signals. First, all users in the same cell simultaneously transmit their pilot sequences. Then, to estimate $\mathbf{h}_{llk}$, the $l$-th base station calculates the correlation between the received signal and $\boldsymbol{\varphi}_k$, the pilot sequence of the $k$-th user, by multiplying the received signal with $\frac{1}{N_\text{p}}\boldsymbol{\varphi}_k^*$, which leads to computing the observation vector, denoted by $\mathbf{u}_{llk}$. Finally, the $l$-th base station obtains LMMSE channel esimation vector, denoted by $\overline{\mathbf{h}}_{llk}$, based on $\mathbf{u}_{llk}$, as explained in the following.

In the asynchronous mode, while the $l$-th cell is in the pilot phase, other cells may be in one of the three phases.
 In order to consider these three cases, we define three binary random variables $\chi_{ljn}^\text{pp}$, $\chi_{ljn}^\text{pu}$, and $\chi_{ljn}^\text{pd}$ which can take the values $1$ and $0$. Consider that users of the $l$-th cell are transmitting their $n$-th symbols of the pilot sequence. Meanwhile, if the $j$-th cell is in the pilot phase, $\chi_{ljn}^\text{pp}=1$, but if the $j$-th cell is in the uplink phase, $\chi_{ljn}^\text{pu}=1$. Finally, if the $j$-th cell is in the downlink phase, we have $\chi_{ljn}^\text{pd}=1$. It is straight forward to show that $\chi_{ljn}^\text{pp}$, $\chi_{ljn}^\text{pu}$, and $\chi_{ljn}^\text{pd}$ take the value $1$ with probabilities $\frac{N_\text{p}}{N_\text{tot}}\frac{N_\text{p}}{N_\text{tot}}$, $\frac{N_\text{p}}{N_\text{tot}}\frac{N_\text{u}}{N_\text{tot}}$, and $\frac{N_\text{p}}{N_\text{tot}}\frac{N_\text{d}}{N_\text{tot}}$, respectively.

We assume that $S_l$ is the set of base stations that are synchronous with the $l$-th base station. In the asynchronous mode, we have $S_l=\{l\}$, and in the synchronous mode, $S_l$ includes all cells. In this way, the received signal  of the $l$-th base station in the pilot transmission phase can be expressed by
\begin{eqnarray}
\mathbf{Y}_l^\text{p}&=&\sum\limits_{j\in S_l}\sum\limits_{k'=1}^K \sqrt{P_{jk'}}\mathbf{h}_{ljk'}\left[\varphi_{k'}^{(1)}, ...,\varphi_{k'}^{(N_\text{p})}\right]
+\sum\limits_{j\not\in S_l}\sum\limits_{k'=1}^K \sqrt{P_\text{d}}\mathbf{H}_{lj}\mathbf{w}_{jk'}\left[\chi_{ljN_\text{p}}^\text{pd} s_{jk'}^{\text{d}(1)}, ...,\chi_{ljN_\text{p}}^\text{pd} s_{jk'}^{\text{d}(N_\text{p})}\right]\nonumber\\
&+&\sum\limits_{j\not\in S_l}\sum\limits_{k'=1}^K \sqrt{P_{jk'}}\mathbf{h}_{ljk'}
\biggl[\chi_{lj1}^\text{pp} s_{jk'}^{\text{p}(1)}+\chi_{lj1}^\text{pu} s_{jk'}^{\text{u}(1)}, ...,\chi_{ljN_\text{p}}^\text{pp} s_{jk'}^{\text{p}(N_\text{p})}+\chi_{lj1}^\text{pu} s_{jk'}^{\text{u}(N_\text{p})}\biggr]
+\mathbf{N},
\end{eqnarray}
where $\mathbf{N}\in \mathbb{C}^{M\times N_\text{p}}$ is the noise in the $l$-th base station (receiver), whose elements are independent and identically distributed complex Gaussian random variables with zero mean and variance $\sigma^2$.
Multiplying the received signal with $\frac{1}{N_\text{p}}\boldsymbol{\varphi}_k^*$, we have
\begin{eqnarray}
\mathbf{u}_{llk}&=&\sum_{j\in S_l}\sqrt{P_{jk}}\mathbf{h}_{ljk}
+\sum_{j\not\in S_l}\sum_{k'=1}^K \sqrt{P_{jk'}}\mathbf{h}_{ljk'}F\left(l,k,j,k'\right)
\nonumber\\
&+&\sum_{j\not\in S_l}\sum_{k'=1}^K \sqrt{P_\text{d}}\mathbf{H}_{lj}\mathbf{w}_{jk'}G\left(l,k,j,k'\right)
+\mathbf{N}\frac{\boldsymbol{\varphi}_k^*}{N_\text{p}},
\end{eqnarray}
where $F\left(l,k,j,k'\right)$ and $G\left(l,k,j,k'\right)$ are
\begin{eqnarray}
F\left(l,k,j,k'\right)&=&\left[\chi_{lj1}^\text{pp} s_{jk'}^{\text{p}(1)}, ...,\chi_{ljN_\text{p}}^\text{pp} s_{jk'}^{\text{p}(N_\text{p})}\right]\frac{\boldsymbol{\varphi}_k^*}{N_\text{p}}
+\left[\chi_{lj1}^\text{pu} s_{jk'}^{\text{u}(1)}, ...,\chi_{ljN_\text{p}}^\text{pu} s_{jk'}^{\text{u}(N_\text{p})}\right]\frac{\boldsymbol{\varphi}_k^*}{N_\text{p}},\\
G\left(l,k,j,k'\right)&=&\left[\chi_{lj1}^\text{pd} s_{jk'}^{\text{d}(1)}, ...,\chi_{ljN_\text{p}}^\text{pd} s_{jk'}^{\text{d}(N_\text{p})}\right]\frac{\boldsymbol{\varphi}_k^*}{N_\text{p}}.
\end{eqnarray}
It is straightforward to show that $F\left(l,k,j,k'\right)$ and $G\left(l,k,j,k'\right)$ have zero mean and the variances $\frac{N_\text{p}+N_\text{u}}{N_{\text{tot}}^2}$ and $\frac{N_\text{d}}{N_{\text{tot}}^2}$, respectively. 
Since $\mathbf{u}_{llk}$ depends on other users' channel estimation, it cannot be concluded that $\mathbf{u}_{llk}$ has necessarily Gaussian distribution. For instance, if at the time of downlink transmission of the $l$-th cell, the $j$-th cell starts transmitting pilot signals, the precoding vector of the $k'$-th user of the $j$-th cell $\mathbf{w}_{jk'}$  and the channel between the $l$-th base station and $j$-th base station $\mathbf{H}_{lj}$  will be dependent. However, we approximately consider Gaussian distribution for $\mathbf{u}_{llk}$. In Section \ref{sim_res}, we show that this is a good approximation. In this way, we have
\begin{eqnarray}
\mathbf{u}_{llk}&\sim& \mathcal{CN}\left(\zbm,\mathrm{\Delta}_{lk}\mathbf{I}_M\right),\\
\mathrm{\Delta}_{lk}&=&\frac{\mathbb{E}\left\|\mathbf{u}_{llk}\right\|^2}{M}
=\frac{P_{lk}}{M}\mathbb{E}\left\|\mathbf{h}_{llk}\right\|^2+\sum_{j\in S_l\backslash\{l\}}\frac{P_{jk}}{M}\mathbb{E}\left\|\mathbf{h}_{ljk}\right\|^2
\nonumber\\
&+&\sum_{j\not\in S_l}\sum_{k'=1}^K \left(\frac{P_{jk'}}{M}\mathbb{E}\left\|\mathbf{h}_{ljk'}\right\|^2\mathbb{E}\left|F\left(l,k,j,k'\right)\right|^2
+ \frac{P_\text{d}}{M}\mathbb{E}\left\|\mathbf{H}_{lj}\mathbf{w}_{jk'}\right\|^2\mathbb{E}\left|G\left(l,k,j,k'\right)\right|^2\right)
+\frac{\sigma^2}{N_\text{p}}\nonumber\\
&=&P_{lk}\beta_{llk}+\sum_{j\in S_l\backslash\{l\}}P_{jk}\beta_{ljk}
+\frac{N_\text{p}+N_\text{u}}{N_{\text{tot}}^2}\sum_{j\not\in S_l}\sum_{k'}P_{jk'}\beta_{ljk'}
+P_\text{d}\frac{N_\text{p}N_\text{d}}{N_{\text{tot}}^2}\sum_{j\not\in S_l}\beta_{lj}
+\frac{\sigma^2}{N_\text{p}}.
\end{eqnarray}
Using the observation vectors $\mathbf{u}_{llk'}$, the LMMSE channel estimation of $\mathbf{h}_{ljk}$ can be written as 
$\overline{\mathbf{h}}_{ljk}=\sum_{k'=1}^K A_{k'}\mathbf{u}_{llk'}$. The optimal coefficient of the estimator is obtained by applying the orthogonality principle as
\begin{eqnarray}
A_{k'}=\frac{\mathbb{E}\left\{\mathbf{h}_{ljk}^H\mathbf{u}_{llk'}\right\}}{\mathbb{E}\left|\mathbf{u}_{llk'}\right|^2}=\frac{\sqrt{P_{jk}}\beta_{ljk}}{\mathrm{\Delta}_{lk'}}\begin{cases}\delta\left(k,k'\right)&j\in S_l,\\ F\left(l,k',j,k\right)& j\not\in S_l.\end{cases}
\end{eqnarray}
Thus, we have
\begin{eqnarray}\label{ch_est}
\overline{\mathbf{h}}_{ljk}
=
\sqrt{P_{jk}}\beta_{ljk}\begin{cases}\frac{1}{\mathrm{\Delta}_{lk}}\mathbf{u}_{llk}&j\in S_l,\\ \sum_{k'=1}^K\frac{\mathbf{u}_{llk'}}{\mathrm{\Delta}_{lk'}}F\left(l,k',j,k\right)& j\not\in S_l.\end{cases}
\end{eqnarray}
Finally, the distribution of $\overline{\mathbf{h}}_{ljk}$ is obtained
\begin{eqnarray}
\label{ch_es}
\overline{\mathbf{h}}_{ljk} \sim
\begin{cases}
\mathcal{CN} \left (\zbm,\frac{P_{jk} \beta_{ljk}^2}{\mathrm{\Delta}_{lk}}\mathbf{I}_M\right )&j\in S_l,\\
\mathcal{CN} \left (\zbm,\frac{N_\text{p}+N_\text{u}}{N_{\text{tot}}^2}\sum_{k'=1}^K\frac{P_{jk} \beta_{ljk}^2}{\mathrm{\Delta}_{lk'}}\mathbf{I}_M\right )& j\not\in S_l.\end{cases}
\end{eqnarray}
Therefore, channel estimation error, defined as $\widetilde{\mathbf{h}}_{ljk}=\mathbf{h}_{ljk}-\overline{\mathbf{h}}_{ljk}$, has the following distribution
\begin{eqnarray}
\label{ch_err1}
\widetilde{\mathbf{h}}_{ljk} \sim \mathcal{CN} \left (\zbm,\beta_{ljk}\mathbf{I}_M-\mathbb{E}\left\{\overline{\mathbf{h}}_{ljk}\overline{\mathbf{h}}_{ljk}^H\right\}\right ),
\end{eqnarray}
where $\mathbb{E}\left\{\overline{\mathbf{h}}_{ljk}\overline{\mathbf{h}}_{ljk}^H\right\}$ can be obtained through (\ref{ch_es}) for $j\in S_l$ and $j\not\in S_l$.
\section{Stochastic Geometry Analysis}\label{sto_sec}
Consider that the $l$-th cell is transmitting its $n$-th downlink symbols. Meanwhile, the $k$-th user of the $l$-th cell can be exposed to different types of interfering signals, namely downlink, uplink, and pilot transmissions of other cells. Figure \ref{dlfig} shows all types of these interfering signals.
In this way, the received signal of the $k$-th user in the $l$-th cell is
\begin{eqnarray}
y_{lk}&=&
\sqrt{P_\text{d}}\sum\limits_{k'=1}^{K}\mathbf{h}_{llk}^T\mathbf{w}_{lk'}s_{lk'}^\text{d}
+\sqrt{P_\text{d}}\sum\limits_{j\in S_l\backslash \{l\}}\sum\limits_{k'=1}^K\mathbf{h}_{jlk}^T\mathbf{w}_{jk'}s_{jk'}^\text{d}
+\sum\limits_{j\not\in S_l}\sum\limits_{k'=1}^K\chi_{ljn}^\text{dd}\sqrt{P_\text{d}}\mathbf{h}_{jlk}^T\mathbf{w}_{jk'}s_{jk'}^\text{d}\nonumber\\
&+&\sum\limits_{j\not\in S_l}\sum\limits_{k'=1}^K\left(\chi_{ljn}^\text{dp}+\chi_{ljn}^\text{du}\right)\sqrt{P_{jk'}}h_{lkjk'}s_{jk'}^\text{d}
+n_\text{d},\, n_\text{d}\sim \mathcal{CN}\left(0,\sigma^2\right),
\end{eqnarray}
where $\chi_{ljn}^\text{dp}$, $\chi_{ljn}^\text{du}$, and $\chi_{ljn}^\text{dd}$ are binary random variables which can take the values $1$ and $0$. Consider that the $l$-th cell users are receiving their $n$-th symbols of the downlink data. 
Meanwhile, if the $j$-th cell's users transmit pilot sequences, $\chi_{ljn}^\text{dp}=1$, but if they transmit uplink signals, $\chi_{ljn}^\text{du}=1$. Finally, if the $j$-th cell transmits downlink signals, we have $\chi_{ljn}^\text{dd}=1$. It is straight forward to show that $\chi_{ljn}^\text{dp}$, $\chi_{ljn}^\text{du}$, and $\chi_{ljn}^\text{dd}$ take the value $1$ with probability $\frac{N_\text{d}}{N_\text{tot}}\frac{N_\text{p}}{N_\text{tot}}$, $\frac{N_\text{d}}{N_\text{tot}}\frac{N_\text{u}}{N_\text{tot}}$, and $\frac{N_\text{d}}{N_\text{tot}}\frac{N_\text{d}}{N_\text{tot}}$, respectively.

\begin{figure}[t]
\centering
\includegraphics[width=8cm,height=4.7cm]{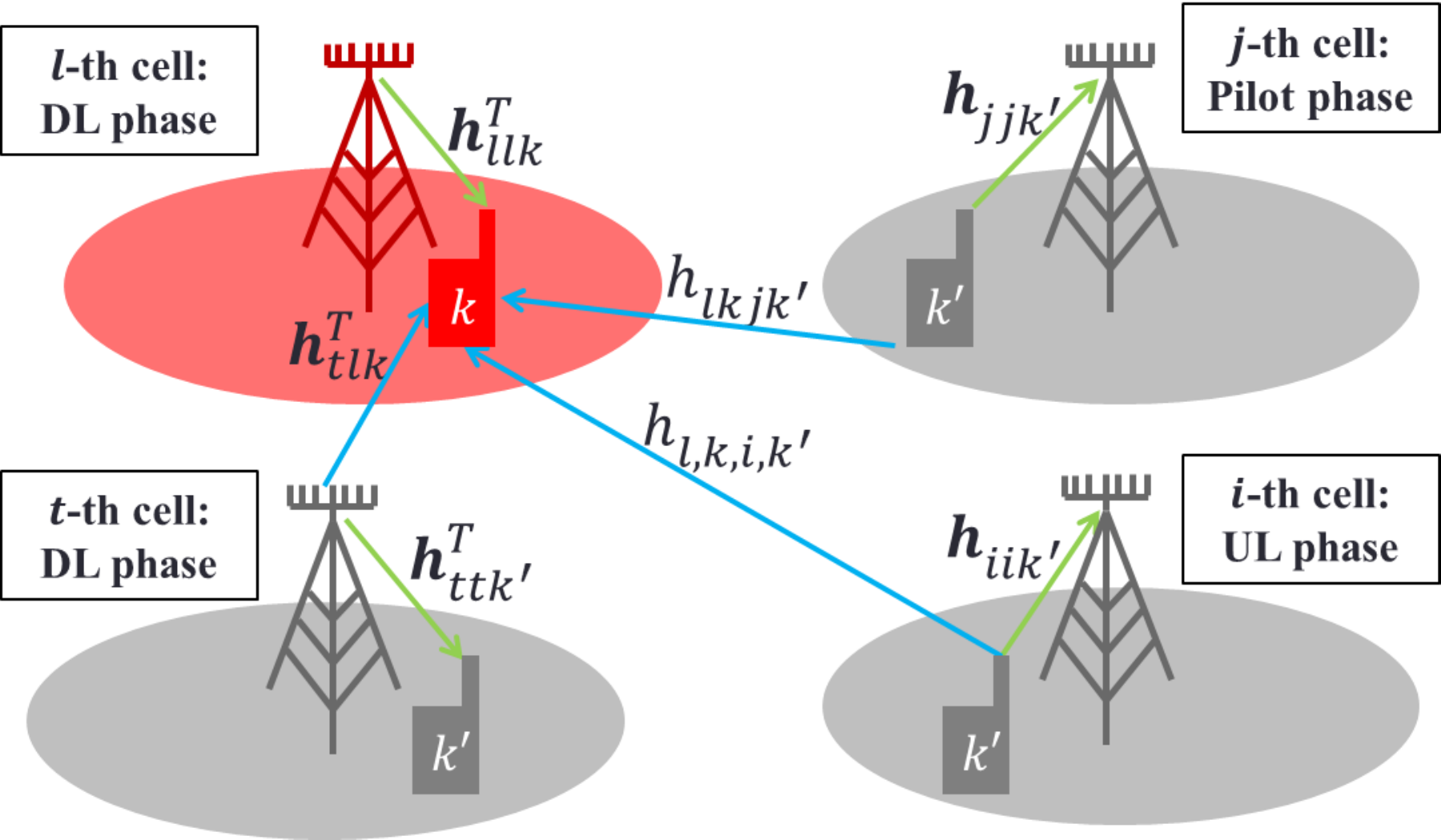}
\caption{Received signals of the desired user in the downlink phase of an asynchronous system. The desired cell is shown with red color. Green vectors indicate the useful signals and the bule ones show the interfering signals. Also, each link is labeled with its channel vector.}
\label{dlfig}
\end{figure}

It is assumed that the desired user, the $k$-th user of the $l$-th cell, knows its related value of $\mathbb{E}\{\mathbf{h}_{llk}^T\mathbf{w}_{lk}\}$. Hence, if we consider that the unkown part of the received signal as an uncorrelated noise, we obtain
\begin{eqnarray}\label{sinr}
\mathrm{SINR}&=&\frac{\left(\mathbb{E}\left\{\mathbf{h}_{llk}^T\mathbf{w}_{lk}\right\}\right)^2}{I_\text{tot}},
\end{eqnarray}
where
\begin{eqnarray}
I_\text{tot}&=&
\mathrm{var}\left\{\mathbf{h}_{llk}^T\mathbf{w}_{lk}\right\}
+\sum\limits_{k'\ne k} \mathbb{E}\left|\mathbf{h}_{llk}^T\mathbf{w}_{lk'}\right|^2
+\sum\limits_{j\in S_l\backslash \{l\}}\sum\limits_{k'}\mathbb{E}\left|\mathbf{h}_{jlk}^T\mathbf{w}_{jk'}\right|^2\nonumber\\
&+&\sum\limits_{j\not\in S_l}\sum\limits_{k'}\chi_{ljn}^\text{dd}\mathbb{E}\left|\mathbf{h}_{jlk}^T\mathbf{w}_{jk'}\right|^2
+\sum\limits_{j\not\in S_l}\sum\limits_{k'}\dfrac{P_{jk'}}{P_\text{d}}\left(\chi_{ljn}^\text{dp}+\chi_{ljn}^\text{du}\right)\mathbb{E}\left|h_{lkjk'}\right|^2
+\dfrac{\sigma^2}{P_\text{d}}
.
\end{eqnarray}

By some calculations, the inverse of the signal-to-interference-plus-noise ratio \footnote{Since, in the process of obtaining the coverage probability, we need the inverse of the SINR, we derive $\mathrm{SINR}^{-1}$ instead of $\mathrm{SINR}$.}, i.e., $\mathrm{SINR}^{-1}$, is obtained as 
\normalsize 
\begingroup\makeatletter\def\f@size{11}\check@mathfonts
\begin{eqnarray}\label{sinr_1}
&&\mathrm{SINR}^{-1}=\gamma_1+\gamma_2+\gamma_3,\\
\label{gam1}
&&\gamma_1=\frac{V_M-1}{C_M^2}
+\frac{N_\text{p}}{C_M^2}
+\frac{\sigma^2r_{llk}^\alpha}{P_\text{d}\omega C_M^2}
+\frac{\sigma^2r_{llk}^{\alpha\left(1-\epsilon\right)}}{P_\text{u}C_M^2\omega^{1-\epsilon}}
+\frac{\sigma^4r_{llk}^{\alpha\left(2-\epsilon\right)}}{N_\text{p}P_\text{u}P_\text{d} C_M^2\omega^{2-\epsilon}}
+\frac{r_{llk}^{\alpha\left(1-\epsilon\right)}}{C_M^2}\left(N_\text{p}+\frac{\sigma^2}{P_\text{d}\omega}r_{llk}^{\alpha}\right)\times\nonumber\\
&&\qquad\left(\sum_{j\in S_l\backslash\{l\}}r_{jjk}^{\alpha\epsilon}r_{ljk}^{-\alpha}
+\frac{N_\text{p}+N_\text{u}}{N_{\text{tot}}^2}\sum_{j\not\in S_l}\sum_{k'}r_{jjk'}^{\alpha\epsilon}r_{ljk'}^{-\alpha}
+\frac{P_\text{d}N_\text{p}N_\text{d}}{P_\text{u}\omega^{-\epsilon}N_{\text{tot}}^2}\sum_{j\not\in S_l}r_{lj}^{-\alpha}\right),\\
\label{gam2}
&&\gamma_2=\frac{N_\text{p}}{C_M^2}\left(r_{llk}^\alpha+r_{llk}^{\alpha \left(2-\epsilon\right)}\mathrm{\Delta}_{lk}^{(1)}\right)\sum_{j\in S_l\backslash \{l\}} r_{jlk}^{-\alpha}
+\frac{N_\text{p}}{C_M^2}\left(r_{llk}^\alpha+r_{llk}^{\alpha \left(2-\epsilon\right)}\mathrm{\Delta}_{lk}^{(1)}\right)\sum_{j\not\in S_l}\chi_{ljn}^\text{dd}r_{jlk}^{-\alpha}\nonumber\\
&&+\frac{M-1}{C_M^2}r_{llk}^{2\alpha}\sum_{j\in S_l\backslash \{l\}} \frac{\mathrm{\Delta}_{lk}}{\mathrm{\Delta}_{jk}}r_{jlk}^{-2\alpha}+\frac{M-1}{C_M^2}r_{llk}^{2\alpha}\frac{N_\text{p}+N_\text{u}}{N_{\text{tot}}^2}\sum_{j\not\in S_l}\chi_{ljn}^\text{dd}\sum\limits_{k'}\frac{\mathrm{\Delta}_{lk}}{\mathrm{\Delta}_{jk'}}r_{jlk}^{-2\alpha},\\
\label{gam3}
&&\gamma_3=\left(r_{llk}^{\alpha}+r_{llk}^{\alpha\left(2-\epsilon\right)}\mathrm{\Delta}_{lk}^{(1)}\right)\frac{P_\text{u}}{P_\text{d}\omega^{\epsilon}C_M^2}
\sum_{j\not\in S_l}\sum_{k'}\left(\chi_{ljn}^\text{dp}+\chi_{ljn}^\text{du}\right)r_{jjk'}^{\alpha\epsilon}r_{lkjk'}^{-\alpha},
\end{eqnarray}
\endgroup
where 
$\mathrm{\Delta}_{lk}^{(1)}=P_\text{u}^{-1}\omega^{\epsilon-1}\mathrm{\Delta}_{lk}-x^{-\alpha\left(1-\epsilon\right)}$
 (See Appendix \ref{sinrcal} for details). The terms in (\ref{gam1}) are caused by the intra-cellular interference and the noise. The terms in (\ref{gam2}) are from pilot contamination, and the expression (\ref{gam3}) is caused by inter-cellular interference in the uplink direcion.

To analyze the system performance, we need to find the PDF of the received SINR. For this reason, in Subsection \ref{dist_dist}, we first derive the distributions of different terms of distance. Then, using these distributions, we derive the PDF of the SINR and the coverage probability (Subsection \ref{cov_prob}) as well as the ergodic rate (Subsection \ref{rate_sub}).

\subsection{Distance Distribution}\label{dist_dist}
In this Subsection, we derive the distributions of the distances required for the coverage probability and the ergodic rate analysis.

\begin{theorem}\label{thm1}
The PDF of the distance between a user and its serving base station, i.e., $r$, is given by
\begin{eqnarray}\label{thm1_eq}
 f\left(r\right)=2\pi\lambda r e^{-\pi \lambda \left(r^2-r_0^2\right)},\quad r>r_0.\end{eqnarray}
\end{theorem}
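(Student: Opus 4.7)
The plan is to treat $r$ as the distance from a typical user to the \emph{nearest} point of the base-station HPPP $\mathrm{\Phi}_\text{b}$, and then incorporate the exclusion disk of radius $r_0$ as a conditioning event. The first step is to argue that, by stationarity of the HPPP (equivalently, by Slivnyak's theorem applied to the typical user location), the complementary cumulative distribution function of $R = $ (distance to nearest base station) equals the void probability of $\mathrm{\Phi}_\text{b}$ on a disk of radius $x$ centered at the user:
\begin{equation}
\mathbb{P}\{R>x\} \;=\; \mathbb{P}\{\mathrm{\Phi}_\text{b}(B(0,x))=0\} \;=\; e^{-\pi\lambda x^{2}}.
\end{equation}

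The second step is to encode the spatial model of Section \ref{sys_model_sec}: each user is distributed uniformly over its own Voronoi cell minus a central disk of radius $r_0$ around its serving base station. Since users connect to the nearest base station, the distance between a user and its serving base station coincides with $R$, and the exclusion rule is equivalent to the deterministic event $\{R>r_0\}$. Therefore the PDF asserted in \refeq{thm1_eq} should be obtained as the conditional distribution of $R$ given $R>r_0$:
\begin{equation}
\mathbb{P}\{R>r \,|\, R>r_{0}\} \;=\; \frac{\mathbb{P}\{R>r\}}{\mathbb{P}\{R>r_{0}\}} \;=\; \frac{e^{-\pi\lambda r^{2}}}{e^{-\pi\lambda r_{0}^{2}}} \;=\; e^{-\pi\lambda(r^{2}-r_{0}^{2})},\qquad r>r_{0}.
\end{equation}
Differentiating $1-\mathbb{P}\{R>r\,|\,R>r_0\}$ with respect to $r$ then yields $f(r)=2\pi\lambda r\,e^{-\pi\lambda(r^{2}-r_{0}^{2})}$ for $r>r_{0}$, which matches \refeq{thm1_eq}.

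The main subtlety (and the only real obstacle) is justifying in step one that the distance from a uniformly placed user in its Voronoi cell to its serving base station inherits the same nearest-neighbor distribution as a deterministically placed ``typical'' point in the plane. I would handle this by invoking the stationarity of $\mathrm{\Phi}_\text{b}$: conditioning on a realization of $\mathrm{\Phi}_\text{b}$, the user is a point chosen from its Voronoi cell, and the distribution of its distance to the cell nucleus, when averaged over the Palm distribution of the base-station process (which, by Slivnyak, is $\mathrm{\Phi}_\text{b}$ itself with an added atom at the origin), is exactly the void-distance law $e^{-\pi\lambda x^{2}}$ computed above. Once that equivalence is accepted, the remaining steps are a direct Bayes-rule truncation and a one-line differentiation.
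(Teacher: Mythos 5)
Your proposal is correct and follows essentially the same route as the paper's Appendix \ref{pthm1}: compute the void probability $e^{-\pi\lambda x^2}$ for the nearest-base-station distance, condition on the exclusion event $\{r>r_0\}$ via Bayes' rule to get the truncated CCDF $e^{-\pi\lambda(r^2-r_0^2)}$, and differentiate. The only difference is that you additionally sketch a Slivnyak/Palm justification for applying the nearest-neighbor law to a user placed in its Voronoi cell, a point the paper handles implicitly by citing \cite{tract} for the $r_0=0$ case.
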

\begin{proof}
The theorem is proved for $r_0=0$ in \cite{tract}. See Apendix \ref{pthm1} for $r_0\ne 0$.
\end{proof}
The expression of $\mathrm{SINR}^{-1}$ in (\ref{sinr_1}) has terms such as $r_{jjk'}^{\alpha\epsilon}r_{ljk'}^{-\alpha}$ and $r_{jjk'}^{\alpha\epsilon}r_{lkjk'}^{-\alpha}$, which have parameters correlated with each other. Hence, as will be presented in the following subsection, we need the PDF of $r_{jjk'}$ given $r_{ljk'}$ as well as the PDF of $r_{jjk'}$ given $r_{lkjk'}$ and $r_{llk}$, which are presented in Theorems \ref{thm2} and \ref{thm3}, respectively.

\begin{theorem}\label{thm2}
The PDF of the distance between a user and its serving base station, $r_1$, conditioned on the distance between this user and another arbitrary base station, $r_2$, is
\begin{eqnarray}f\left(r_1|r_2\right)=\frac{2\pi \lambda r_1 e^{-\pi \lambda r_1^2}}{e^{-\pi \lambda r_0^2}-e^{-\pi \lambda r_2^2}},\quad r_0<r_1<r_2.\end{eqnarray}
\end{theorem}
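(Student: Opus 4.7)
The plan is to derive this conditional density via the standard recipe for HPPP null probabilities: first use Slivnyak's theorem to peel off the ``tagged'' base station at distance $r_2$ so that the remaining base stations are still distributed as an HPPP of density $\lambda$, and then recognize the sought density as the unconditioned nearest-distance density truncated to the interval $(r_0,r_2)$.

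First, I would set up the conditioning carefully. Tag the ``another arbitrary base station'' in the statement and condition on its being located at distance exactly $r_2$ from the user. By Slivnyak's theorem, removing this tagged point from $\mathrm{\Phi}_\text{b}$ leaves a process that is again HPPP of density $\lambda$, independent of the tagged point. The serving base station is, by definition, the element of $\mathrm{\Phi}_\text{b}$ nearest to the user, and because the theorem concerns another base station (distinct from the serving one), the distance $r_1$ to the serving base station coincides with the nearest-point distance of the residual HPPP and necessarily satisfies $r_1<r_2$. The exclusion-disk construction of the users (no user lies within $r_0$ of its serving base station) simultaneously forces $r_1>r_0$, just as in Theorem~\ref{thm1}.

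Second, I would invoke the void probability of an HPPP: for the residual process, the probability that no point lies in the disk of radius $r$ around the user is $e^{-\pi\lambda r^2}$, so the nearest-distance has the unconditioned density $2\pi\lambda r\,e^{-\pi\lambda r^2}$ on $(0,\infty)$. Conditioning on the two constraints $r_1>r_0$ and $r_1<r_2$ above is a pure truncation, because under Slivnyak the tagged base station and the residual HPPP are independent. The normalization constant is
\begin{eqnarray}
\int_{r_0}^{r_2} 2\pi\lambda r\,e^{-\pi\lambda r^2}\,dr = e^{-\pi\lambda r_0^2}-e^{-\pi\lambda r_2^2},
\end{eqnarray}
and dividing yields exactly the stated expression.

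The main obstacle, such as it is, lies in justifying that the event $\{r_1>r_0\}$ (an exclusion imposed through the user distribution rather than the base-station distribution) interacts with the conditioning $\{R_2=r_2\}$ purely through truncation, and in arguing that the ``arbitrary'' base station can be treated as a tagged Palm point so that Slivnyak applies. Once this reduction is made, only the elementary Gaussian-in-$r^2$ integration above remains, so I would not expect any real technical difficulty beyond bookkeeping; the bulk of the write-up consists of stating Slivnyak and performing the truncation.
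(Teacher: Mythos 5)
Your proposal is correct and follows essentially the same route as the paper's Appendix C: both arguments reduce the claim to truncating the nearest-base-station distance law to the interval $\left(r_0,r_2\right)$ using the HPPP void probability and renormalizing by $e^{-\pi\lambda r_0^2}-e^{-\pi\lambda r_2^2}$, the paper doing this via Bayes' rule on the conditional CCDF and differentiation, you via densities directly. Your explicit appeal to Slivnyak's theorem merely makes rigorous the conditioning step that the paper treats implicitly, so no substantive difference remains.
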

\begin{proof}
See Appendix \ref{pthm2}.
\end{proof}
\begin{theorem}\label{thm3}
The PDF of the distance between a user and its serving base station, $r_{jjk'}$, conditioned on the distance between another user and the user serving base station, $r_{llk}$, as well as the distance between these two users, $r_{lkjk'}$, is
\begin{eqnarray}
\mathbb{P}\biggl(r_{jjk'}=s\bigg|r_{lkjk'}=r,\,r_{llk}=x\biggr)
=\frac{2\pi\lambda s e^{-\pi\lambda s^2}}{e^{-\pi\lambda R_1^2}-e^{-\pi\lambda R_2^2}},\quad R_1<s<R_2,
\end{eqnarray}
where $R_1 = \max{\left(r_0,\,x-r\right)}$ and $R_2 = x+r$.
\end{theorem}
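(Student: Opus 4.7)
The plan is to combine a triangle-inequality argument for the support of $r_{jjk'}$ with the Rayleigh-type nearest-neighbor distance law from Theorem~\ref{thm1}, truncating the latter to the allowed interval.

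First I would place $U_k$, the $k$-th user of the $l$-th cell, at the origin. The conditioning events $r_{llk}=x$ and $r_{lkjk'}=r$ then put $B_l$ on the circle of radius $x$ centered at the origin, $U_{k'}$ at distance $r$ from the origin, and (crucially) force the open disk $D_x$ of radius $x$ around $U_k$ to be empty of base stations besides $B_l$ on its boundary, since $B_l$ is the nearest base station to $U_k$. Writing $s=r_{jjk'}$ for the distance from $U_{k'}$ to its serving station $B_j$, I would establish three constraints that pin $s$ into $[R_1,R_2]$: the intrinsic exclusion disk of radius $r_0$ from the spatial model gives $s\ge r_0$; the bound $s\ge x-r$ follows because $s<x-r$ would imply $|U_k-B_j|\le r+s<x$ and thus place $B_j$ inside the empty disk $D_x$; and the bound $s\le x+r$ follows because $|U_{k'}-B_l|\le x+r$ by the triangle inequality, while the nearest-station property forces $s\le |U_{k'}-B_l|$.

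Second, on the interval $(R_1,R_2)$ I would identify the conditional density as a truncation of Theorem~\ref{thm1}. By Slivnyak's theorem, conditioning the HPPP $\mathrm{\Phi}_\text{b}$ to contain a point at $B_l$ leaves the remainder as an independent HPPP of intensity $\lambda$, so the distance from $U_{k'}$ to the nearest point of $\mathrm{\Phi}_\text{b}\setminus\{B_l\}$ carries the Rayleigh-type density $2\pi\lambda s\,e^{-\pi\lambda s^2}$ from the standard void-probability argument. Restricting this density to $(R_1,R_2)$ and normalizing by $\int_{R_1}^{R_2}2\pi\lambda u\,e^{-\pi\lambda u^2}\,du=e^{-\pi\lambda R_1^2}-e^{-\pi\lambda R_2^2}$ then produces the claimed expression, in direct parallel with the proof of Theorem~\ref{thm2}.

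The step I expect to be the main obstacle is the rigorous version of this second part. Under the full joint conditioning on $r_{llk}=x$ and $r_{lkjk'}=r$, one has to verify that the conditional density on $(R_1,R_2)$ is really proportional to $s\,e^{-\pi\lambda s^2}$ and is not contaminated by the random distance $|U_{k'}-B_l|$, which is itself a nondegenerate function of the angular position of $B_l$ around $U_k$. This is the same subtlety that makes Theorem~\ref{thm2} nontrivial, and I would resolve it in the same spirit: work in the reduced Palm distribution of $\mathrm{\Phi}_\text{b}$ with the void event on $D_x$ built in, so that the joint conditioning collapses cleanly into a truncation of the unconditional nearest-neighbor density onto $[R_1,R_2]$.
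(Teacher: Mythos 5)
Your proposal follows essentially the same route as the paper's proof: the triangle inequality combined with the nearest-base-station property (plus the $r_0$ exclusion) pins $r_{jjk'}$ into $(R_1,R_2)$, and the conditional law is then obtained as the void-probability (Rayleigh) density truncated and normalized on that interval, in direct parallel with Theorem \ref{thm2}. The subtlety you flag---whether conditioning on the exact values of $r_{llk}$ and $r_{lkjk'}$ truly collapses to a pure truncation---is handled no more rigorously in the paper, which simply equates the two conditionings before differentiating the truncated CDF, so your sketch matches the paper's argument in both structure and level of rigor.
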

\begin{proof}
See Appendix \ref{pthm3}.
\end{proof}
\subsection{Coverage Probability}\label{cov_prob}
The coverage probability can be expressed as
\begin{eqnarray}\label{convert_org}
\mathbb{P}\left(\mathrm{SINR}>T\right)=\int \mathbb{P}\left(\mathrm{SINR}>T|r_{llk}=x\right) f\left(x\right) \,\mathrm{d}x,
\end{eqnarray}
where $f\left(\cdot\right)$ is given in Theorem \ref{thm1}. Due to the nature of massive MIMO systems, in which SINR depends on the large scale fading, the coverage probability can not be derived using previously known stochastic geometry procedures such as \cite{tract}. Thus, we need to use approximation schemes as follows. 
The conditional coverage probability in (\ref{convert_org}) can be approximated as
\begin{eqnarray}\label{cond1}
\mathbb{P}\left(\mathrm{SINR}>T|r_{llk}=x\right)
&=&\mathbb{P}\left(1>T\left(\mathrm{SINR}\right)^{-1}|r_{llk}=x\right)
\overset{(a)}{\approx}  \mathbb{P}\left(g>T\left(\mathrm{SINR}_x\right)^{-1}\right)\nonumber\\
&\overset{(b)}{\approx}&\sum_{n=1}^N \left(-1\right)^{n+1}\binom{N}{n}\mathbb{E}\left\{e^{-\eta nT\mathrm{SINR}_x^{-1}}\right\},
\end{eqnarray}
where $\mathrm{SINR}_x$ stands for the SINR under the condition of $r_{llk}=x$. Also, $(a)$ is according to \cite{gamma_limit}, in which $1$ is approximated by a dummy Gamma random variable with unit mean and the shape parameter of $N$, such that $\lim_{N\to\infty}{N^Ng^{N-1}e^{-Ng}}/{\Gamma\left(N\right)}=\delta\left(g-1\right)$, where $\delta(\cdot)$ is the Dirac delta function, and $\Gamma(\cdot)$ is the Gamma function defined as $\Gamma(a) =\int_0^\infty e^{-t}t^{a-1}\,\mathrm{d}t$. In Section \ref{sim_res}, we observe that it does not need to choose a very large $N$ that our analytical results converge to Monte Carlo results. Finally, $(b)$ is shown in Appendix \ref{gam_appr}.

Due to the complexity of $\mathrm{SINR}_x^{-1}$, we use an approximation of it, denoted by $\widehat{\mathrm{SINR}_x}^{-1}$, which is obtained by using the following approximations of the terms in (\ref{sinr_1}).
\begin{eqnarray}
\label{appr1}
&&\chi_{ljn}^\text{dd}\approx\mathbb{E}\left\{\chi_{ljn}^\text{dd}\right\}=\frac{N_\text{d}^2}{N_\text{tot}^2},\\
&&\chi_{ljn}^\text{dp}+\chi_{ljn}^\text{du}\approx\mathbb{E}\left\{\chi_{ljn}^\text{dp}+\chi_{ljn}^\text{du}\right\}=\frac{N_\text{d}\left(N_\text{p}+N_\text{u}\right)}{N_\text{tot}^2},\\
&&\mathrm{\Delta}_{lk}^{(1)}\approx Q_1= \mathbb{E}\left\{\mathrm{\Delta}_{lk}^{(1)}|r_{llk}=x\right\},\\
&&\frac{\mathrm{\Delta}_{lk}}{\mathrm{\Delta}_{jk}}\approx1,\\
\label{Q_2}
&&\sum_{j\not\in S_l}r_{lj}^{-\alpha}\approx Q_2=\mathbb{E}\left\{\sum_{j\not\in S_l}r_{lj}^{-\alpha}|r_{llk}=x\right\},\\
\label{appr2}
&&\sum_{j\not\in S_l}\sum_{k'}r_{jjk'}^{\alpha\epsilon}r_{lkjk'}^{-\alpha}\approx Q_3=\mathbb{E}\left\{\sum_{j\not\in S_l}\sum_{k'}r_{jjk'}^{\alpha\epsilon}r_{lkjk'}^{-\alpha}|r_{llk}=x\right\}.
\end{eqnarray}
Therefore, we have
\begin{eqnarray}
\widehat{\mathrm{SINR}}_x^{-1}=c_1(x)+e_1(x)+e_2(x),
\end{eqnarray}
where $c_1(x)$, $e_1(x)$, and $e_2(x)$ are given by
\begin{eqnarray}
\label{c_1}
c_1(x)&=&\frac{V_M-1}{C_M^2}
+\frac{N_\text{p}}{C_M^2}
+\frac{\sigma^2}{P_\text{d}\omega C_M^2}x^\alpha
+\frac{\sigma^2}{P_\text{u}C_M^2\omega^{1-\epsilon}}x^{\alpha\left(1-\epsilon\right)}
+\frac{\sigma^4}{N_\text{p}P_\text{u}P_\text{d} C_M^2\omega^{2-\epsilon}}x^{\alpha\left(2-\epsilon\right)}
\nonumber\\
&+&\left(x^{\alpha}+x^{\alpha\left(2-\epsilon\right)}Q_1\right)
\frac{P_\text{u}N_\text{d}\left(N_\text{p}+N_\text{u}\right)}{P_\text{d}\omega^{\epsilon}C_M^2N_\text{tot}^2}Q_3
+\left(N_\text{p}+\frac{\sigma^2}{P_\text{d}\omega}x^{\alpha}\right)
\frac{P_\text{d}N_\text{p}N_\text{d}x^{\alpha\left(1-\epsilon\right)}}{P_\text{u}\omega^{-\epsilon}C_M^2N_{\text{tot}}^2}
Q_2,\\
\label{e_1}
e_1(x)&=&\frac{N_\text{p}}{C_M^2}\left(x^\alpha+x^{\alpha \left(2-\epsilon\right)}Q_1\right)
\sum_{j\in S_l\backslash \{l\}} r_{jlk}^{-\alpha}+
\frac{N_\text{p}}{C_M^2}\left(x^\alpha+x^{\alpha \left(2-\epsilon\right)}Q_1\right)
\frac{N_\text{d}^2}{N_\text{tot}^2}\sum_{j\not\in S_l}r_{jlk}^{-\alpha}
\nonumber\\
&+&\frac{M-1}{C_M^2}x^{2\alpha}
\sum_{j\in S_l\backslash \{l\}} r_{jlk}^{-2\alpha}
+\frac{M-1}{C_M^2}x^{2\alpha}
\frac{N_\text{p}N_\text{d}^2\left(N_\text{p}+N_\text{u}\right)}{N_{\text{tot}}^4}
\sum_{j\not\in S_l}r_{jlk}^{-2\alpha},
\end{eqnarray}
and
\begin{eqnarray}
\label{e_2}
&&e_2(x)=\frac{x^{\alpha\left(1-\epsilon\right)}}{C_M^2}\left(N_\text{p}+\frac{\sigma^2}{P_\text{d}\omega}x^{\alpha}\right)
\left(\sum_{j\in S_l\backslash\{l\}}r_{jjk}^{\alpha\epsilon}r_{ljk}^{-\alpha}
+\frac{N_\text{p}+N_\text{u}}{N_{\text{tot}}^2}\sum_{j\not\in S_l}\sum_{k'}r_{jjk'}^{\alpha\epsilon}r_{ljk'}^{-\alpha}
\right).
\end{eqnarray}
In this way, using $\widehat{\mathrm{SINR}}_x^{-1}$ in (\ref{cond1}), the conditional coverage probability is approximated as
\begin{eqnarray}\label{prop_cond}
\mathbb{P}\left(\mathrm{SINR}>T|r_{llk}=x\right)
\approx\sum_{n=1}^N \left(-1\right)^{n+1}\binom{N}{n}
\mathcal{C}_1\left(T,n,x\right)\mathcal{E}_1\left(T,n,x\right)\mathcal{E}_2\left(T,n,x\right),
\end{eqnarray}
where
\begin{eqnarray}
\mathcal{C}_1\left(T,n,x\right)&=&\exp\left(-\eta nTc_1(x)\right),\\
\mathcal{E}_1\left(T,n,x\right)&=&\mathbb{E}_{\left\{r_{jlk}|j\ne l\right\}}\left\{\exp\left(-\eta nTe_1(x)\right)\right\},\\
\mathcal{E}_2\left(T,n,x\right)&=&\mathbb{E}_{\left\{r_{jjk'}r_{ljk'}|j\ne l\right\}}\left\{\exp\left(-\eta nTe_2(x)\right)\right\}.
\end{eqnarray}

Finally, by using (\ref{convert_org}), (\ref{prop_cond}), and Theorem \ref{thm1}, the coverage probability is obtained from its conditional form as
\begin{eqnarray}\label{cov_prob_frml}
&\mathbb{P}\left(\mathrm{SINR}>T\right)&\approx
\sum_{n=1}^N \left(-1\right)^{n+1}\binom{N}{n}\times\nonumber\\
&&\int_{r_0}^\infty e^{-\pi \lambda \left(x^2-r_0^2\right)}
\mathcal{C}_1\left(T,n,x\right) \mathcal{E}_1\left(T,n,x\right) \mathcal{E}_2\left(T,n,x\right)2\pi\lambda x\,\mathrm{d}x.
\end{eqnarray}

In Appendix \ref{cov_cal}, $\mathcal{C}_1\left(T,n,x\right)$, $\mathcal{E}_1\left(T,n,x\right)$, and $\mathcal{E}_2\left(T,n,x\right)$ in (\ref{cov_prob_frml}) are obtained for both asynchronous and synchronous modes.

\textbf{Full power control case ($\epsilon=1$) in the asynchronous mode:} Using uplink power control in the asynchronous mode can effectively increases the inter-cellular interference in the asynchronous mode. For $\epsilon=1$ and finite number of antennas, the inter-cellular interference which is caused by other cell's users is the dominant source of interference. Hence, we have 
\begin{eqnarray}
\widehat{\mathrm{SINR}}_x^{-1}\approx \left(x^{\alpha}+x^{\alpha\left(2-\epsilon\right)}Q_1\right)\frac{P_\text{u}}{P_\text{d}\omega^{\epsilon}C_M^2}
\frac{N_\text{d}\left(N_\text{p}+N_\text{u}\right)}{N_\text{tot}^2}Q_3,
\end{eqnarray}
and the coverage probability of the asynchronous mode for $\epsilon=1$ is
\begin{eqnarray}
&&\mathbb{P}\left(\mathrm{SINR}>T\right)\approx
\sum_{n=1}^N \left(-1\right)^{n+1}\binom{N}{n}\times\nonumber\\
&&\int_{r_0}^\infty
\exp\left(-\pi \lambda \left(x^2-r_0^2\right)-\eta nT
\left(x^{\alpha}+x^{\alpha\left(2-\epsilon\right)}Q_1\right)
\frac{P_\text{u}N_\text{d}\left(N_\text{p}+N_\text{u}\right)}{P_\text{d}\omega^{\epsilon}C_M^2N_\text{tot}^2}Q_3
\right)
2\pi\lambda x\,\mathrm{d}x.
\end{eqnarray}

\textbf{Infinite number of antennas:} In (\ref{c_1})-(\ref{e_2}), we observe that only the expression in the second term of (\ref{e_1}), which is caused by the pilot contamination problem, grows as the number of antennas $M$ increases. Hence, when $M$ tends to infinity, the coverage probability expression in (\ref{cov_prob_frml}) is simplified as
\begin{eqnarray}
&&\mathbb{P}\left(\mathrm{SINR}>T\right)\overset{M\to \infty}{\approx}
\sum_{n=1}^N \left(-1\right)^{n+1}\binom{N}{n}\times\nonumber\\
&&\int_{r_0}^\infty 
\exp\left(-\pi \lambda \left(x^2-r_0^2\right)+\int_x^\infty \left(\exp\left(C\left(T,n,x\right)r^{-2\alpha}\right)-1\right)2\pi\lambda r\,\mathrm{d}r\right)
2\pi\lambda x\,\mathrm{d}x,
\end{eqnarray}
where $C(x)$ is defined in (\ref{c_asyn}) and (\ref{c_syn}) for the asynchronous and synchronous modes, respectively.

\textbf{No power control case ($\epsilon=0$):} When there is no uplink power control, the experssion of $\mathcal{E}_2\left(T,n,x\right)$ can be expressed by a 1-D integral. Thus, the coverage probability in the asynchronous mode is obtained as
\begin{eqnarray}
&&\mathbb{P}\left(\mathrm{SINR}>T\right)\approx
\sum_{n=1}^N \left(-1\right)^{n+1}\binom{N}{n}\times\nonumber\\
&&\int_{r_0}^\infty\exp\biggl(-\pi \lambda \left(x^2-r_0^2\right)-\eta nTc_1(x)+
N_\text{p}
\int_{\pi\lambda R_\text{e}^2}^\infty 
\left[\exp\left(D^\text{asyn}\left(T,n,x\right)R_\text{e}^{-\alpha} r^{-\frac{\alpha}{2}}\right)-1\right]
\,\mathrm{d}r\nonumber\\
&&+\int_{\pi\lambda x^2}^\infty \left[\exp\left(
B^\text{asyn}\left(T,n,x\right)R_\text{e}^{-\alpha}t^\frac{-\alpha}{2}+C^\text{asyn}\left(T,n,x\right)R_\text{e}^{-2\alpha}t^{-\alpha}
\right)-1\right]\,\mathrm{d}t
\biggr)2\pi\lambda x\,\mathrm{d}x,
\end{eqnarray}
and in the synchronous mode, we have
\begin{eqnarray}
&&\mathbb{P}\left(\mathrm{SINR}>T\right)\approx
\sum_{n=1}^N \left(-1\right)^{n+1}\binom{N}{n}\times\nonumber\\
&&\int_{r_0}^\infty\exp\biggl(-\pi \lambda \left(x^2-r_0^2\right)-\eta nTc_1(x)+
\int_{\pi\lambda R_\text{e}^2}^\infty 
\left[\exp\left(D^\text{syn}\left(T,n,x\right)R_\text{e}^{-\alpha} t^{-\frac{\alpha}{2}}\right)-1\right]
\,\mathrm{d}t\nonumber\\
&&+\int_{\pi\lambda x^2}^\infty \left[\exp\left(
B^\text{syn}\left(T,n,x\right)R_\text{e}^{-\alpha}t^\frac{-\alpha}{2}+C^\text{syn}\left(T,n,x\right)R_\text{e}^{-2\alpha}t^{-\alpha}
\right)-1\right]\,\mathrm{d}t
\biggr)2\pi\lambda x\,\mathrm{d}x.
\end{eqnarray}
\subsection{Downlink Ergodic Rate}\label{rate_sub}
Since the coverage probability is a good metric for delay-sensitive applications but ergodic rate is good for delay-insensitive applications, we analyze the total ergodic rate of a cell. The total ergodic rate over all users of a cell is obtained as
\begin{eqnarray}\label{adr}
&&\mathcal{R}=\frac{N_\text{p}N_\text{d}}{N_\text{tot}}\mathbb{E}\left\{\log_2{\left(1+\mathrm{SINR}\right)}\right\}\overset{(c)}{=}\frac{N_\text{p}N_\text{d}}{N_\text{tot}}
\int_{s>0} \mathbb{P}\left(\log_2{\left(1+\mathrm{SINR}\right)}>s\right)\,\mathrm{d}s\nonumber\\
&&\overset{(d)}{=}\frac{N_\text{p}N_\text{d}}{N_\text{tot}}
\int_{t>0}\frac{ \mathbb{P}\left(\mathrm{SINR}>t\right)}{\left(t+1\right)\ln{2}}\,\mathrm{d}t
\overset{(e)}{\approx}\frac{N_\text{p}N_\text{d}}{N_\text{tot}}\sum_{n=1}^N \frac{\left(-1\right)^{n+1}}{\ln{2}}\binom{N}{n}\times\nonumber\\
&&\qquad\qquad\int_{r_0}^\infty 2\pi\lambda x e^{-\pi\lambda\left(x^2-r_0^2\right)}
\int_{t>0}\frac{ \mathcal{C}_1\left(t,n,x\right)\mathcal{E}_1\left(t,n,x\right)\mathcal{E}_2\left(t,n,x\right)}{\left(t+1\right)}\,\mathrm{d}t\,\mathrm{d}x,
\end{eqnarray}
where $(c)$ is obtained due to the fact that for a nonnegative random variable $S$ we have $\mathbb{E}\{S\}=\int_0^\infty\mathbb{P}\left(S>s\right)\,\mathrm{d}s$. In $(d)$, we change the variable as $t=2^s-1$ and in $(e)$, the expression in (\ref{cov_prob_frml}) is replaced. Note that in (\ref{adr}), $N_\text{p}$ is the number of users in a cell and $\frac{N_\text{d}}{N_\text{tot}}$ is the fraction of the coherence time which is dedicated to the downlink transmission.
\section{Simulation Results}\label{sim_res}
In this section, the analytical results of Section \ref{sto_sec} are validated by comparing them with Monte Carlo simulations. We evaluate the ergodic rate as a function of the number of pilot symbols and  the uplink power control parameter. We consider that the base stations are distributed by the Poisson point process of the density $\lambda$ in a square region whose sides have length $4000\,\textrm{m}$. Then, we generate users' points with the Poisson point process of a large density to make sure that there are $N_\text{p}$ users in each cell. Users connect to the nearest base station. In addition, users whose distance from the serving base station are less than $r_0$ are removed. Then, in each cell, $N_\text{p}$ users are selected randomly and the reminder of them are removed. In Table \ref{param}, the considered system parameters, which are in harmony with \cite{emil}, \cite{heath}, \cite{table_ref}, are presented.

\begin{table}[!t]
\centering
\caption{Simulation Parameters}\label{TB1}
\begin{tabular}{|c|c|}
\hline
\textbf{System Parameter}&\textbf{Value}\\
\hline
$P_\text{d}$ & $45\mathrm{dBm}$\\
$P_\text{u}$ & $23\mathrm{dBm}$\\
$\sigma_n^2$&$-200\mathrm{dBm}$\\
$\omega$&$130\mathrm{dB}$\\
$\alpha$&$4$\\
$M$&$[64,128,10000]$\\
$N_\text{tot}$&$40$\\
$N_\text{p}$&$10$\\
$Z$&$2$\\
$R_e$&$500m$\\
$r_0$&$50m$\\
\hline
\end{tabular}
\label{param}
\end{table}

In Fig. \ref{asyncov}, comparisons between the analytical results and Monte Carlo simulations of the asynchronous mode for different values of the power control parameter, $\epsilon$, are demonstrated. The results for the synchronous mode are also shown in Fig. \ref{syncov}. These figures show the coverage probability as a function of the threshold $T$. Note that the analytical results of (\ref{cov_prob_frml}) converge to the Monte Carlo simulations with a small value of the shape parameter of the Gamma random variable $N$. 
For example, in the asynchronous mode, for all values of $\epsilon$, $N$ in (\ref{cov_prob_frml}) equals $1$. However, in the synchronous mode, we consider $N=[2,3,4,8]$. 
As observed, the analytical results tightly mimic the exact Monte Carlo results for both synchronous and asynchronous systems and for different parameter settings.
\begin{figure}[t]
\centering
\includegraphics[width=9cm,height=6cm]{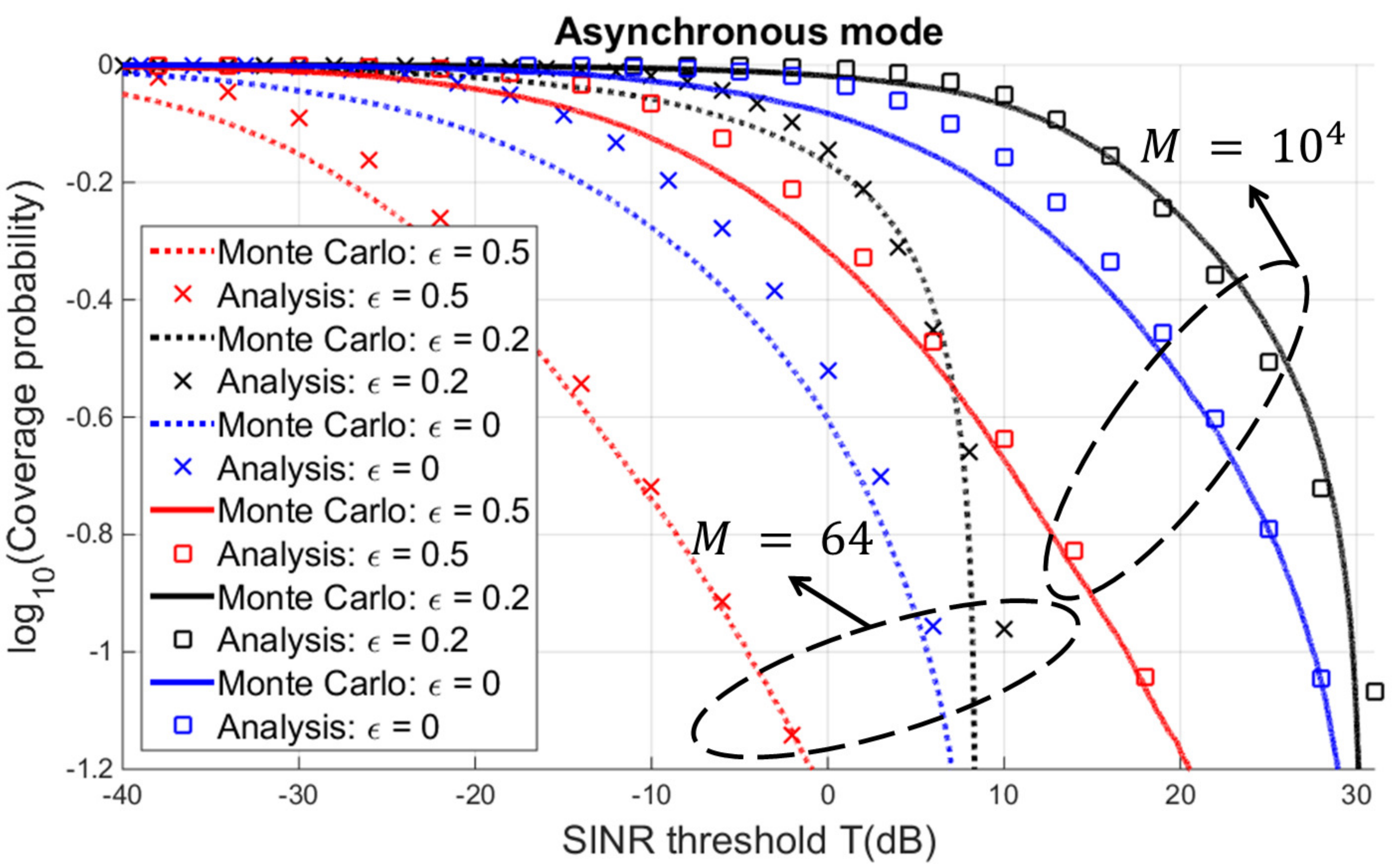}
\caption{Comparison between Monte Carlo and analytical results for different values of ($M$,$\epsilon$) in the asynchronous mode.}
\label{asyncov}
\end{figure}

\begin{figure}[t]
\centering
\includegraphics[width=9cm,height=6cm]{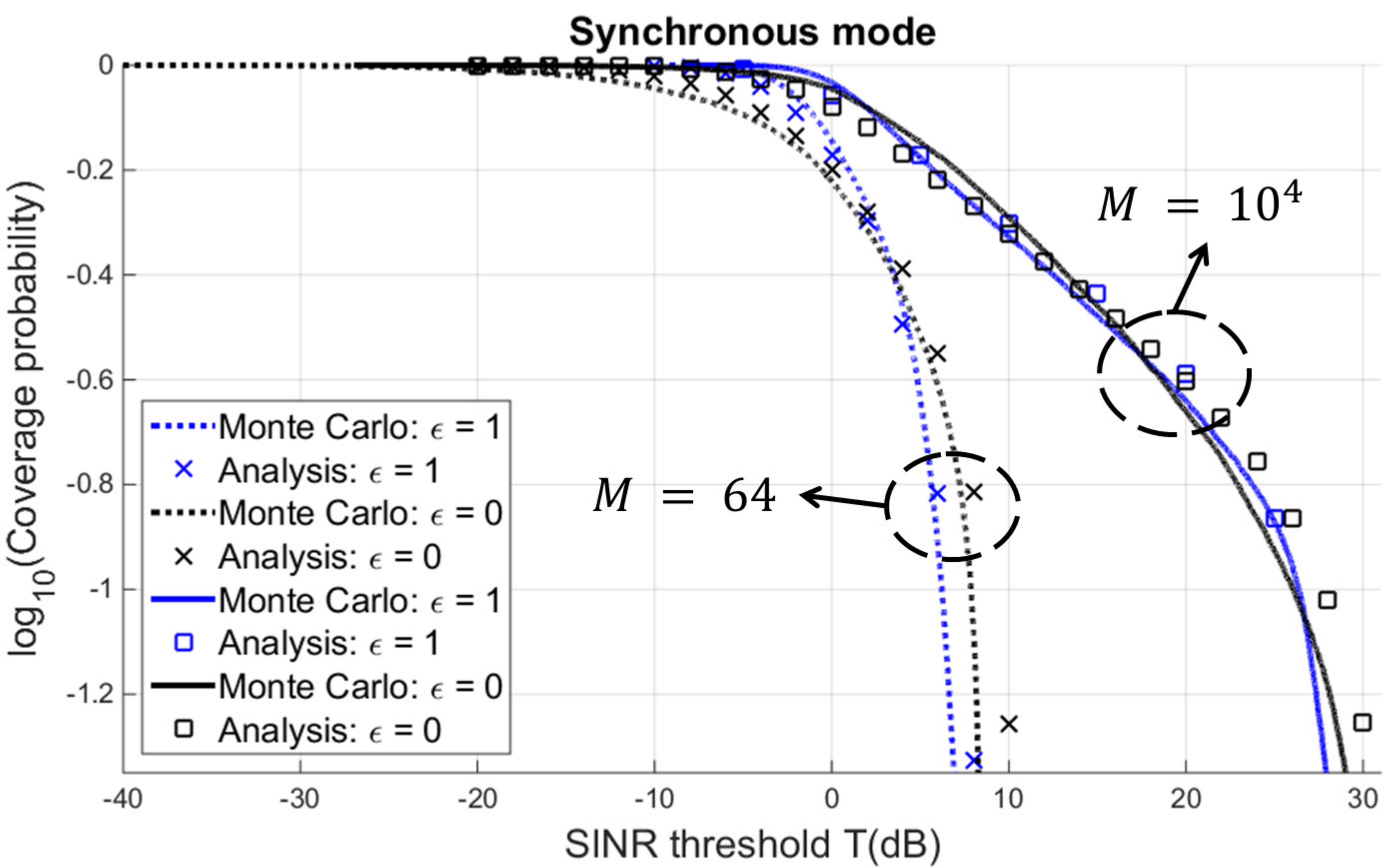}
\caption{Comparison between Monte Carlo and analytical results for different values of ($M$,$\epsilon$) in the synchronous mode.}
\label{syncov}
\end{figure}

Based on the analytical results, we obtain the optimal power control parameter $\epsilon$ and the number of pilot symbols, $N_\text{p}$, in order to maximize the downlink ergodic rate of a cell. We also perform comparisons between the system performance in the asynchronous and synchronous modes.

\textbf{The number of base station antennas:} As seen in Figs. \ref{asyncov} and \ref{syncov}, as the number of base station antennas, $M$, increases, the coverage probability curves in the asynchronous and synchronous modes move to higher values of SINR. Figures \ref{(a)syn_0} and \ref{(a)syn_05} show the coverage probability in both the asynchronous and synchronous modes for $\epsilon = 0$ and $\epsilon = 0.5$, respectively. In these figures, results for different values of $M$ are shown. We observe that as $M$ increases, the gap between the asynchronous and synchronous modes decreases. In Fig. \ref{(a)syn_0}, the coverage probability in the asynchronous and synchronous modes show the same range of SINR for $(M,\epsilon)=(10^4,0)$. Note that for an infinite number of $M$, the dominant source of interference is the pilot contamination. Hence there is small difference between the asynchronous and synchronous modes. In Fig. \ref{(a)syn_05}, it is also observed that for lower SINR thresholds, the synchronous system with $M=64$ outperforms the asynchronous system with $M=10^4$, because as $\epsilon$ increases the pilot contamination dominates the inter-cellular interference for higher values of $M$. 

\begin{figure}[t]
\centering
\includegraphics[width=9cm,height=6cm]{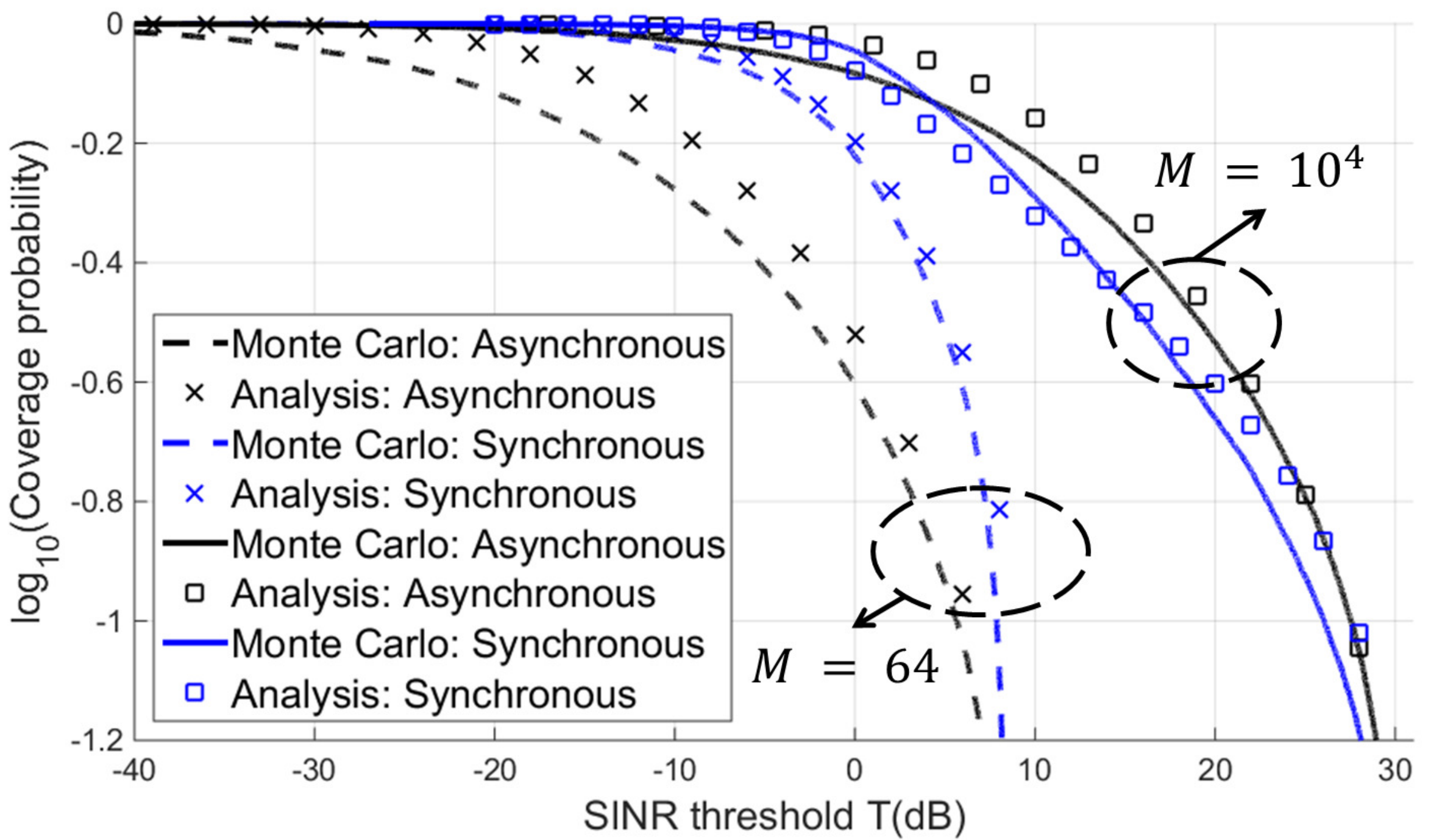}
\caption{Comparison between the asynchronous and synchronous modes for $\epsilon=0$.}
\label{(a)syn_0}
\end{figure}

\begin{figure}[t]
\centering
\includegraphics[width=9cm,height=6cm]{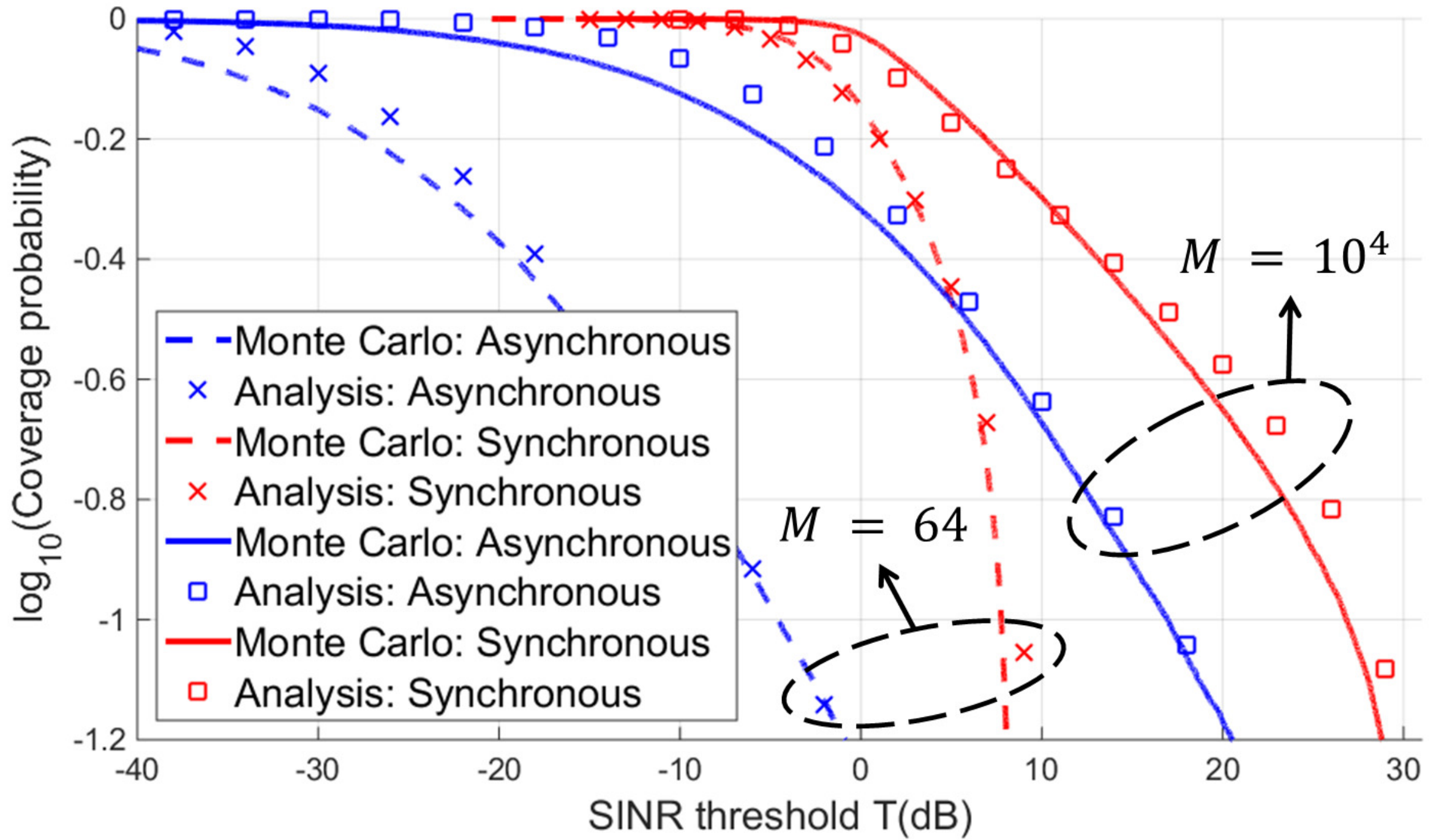}
\caption{Comparison between the asynchronous and synchronous modes for $\epsilon=0.5$.}
\label{(a)syn_05}
\end{figure}

\textbf{The number of pilot symbols:} In Fig. \ref{Np_opt}, the downlink ergodic rate of a cell is shown as a function of $N_\text{p}$ for different values of power control parameter, $\epsilon =[0.2, 0.5, 1]$, in the asynchronous and synchronous modes. In both modes, we observe that as $N_\text{p}$ increases, first, the ergodic rate of a cell increases and then it decreases. This is intuitive because as the number of pilot symbols increases the number of users is also increased but, on the other hand, the interference becomes stronger and also there are less resources for information transmission. Thus, there is a tradeoff and the maximum rate is achieved with a finite number of pilot symbols.

\begin{figure}[t]
\centering
\includegraphics[width=9cm,height=6cm]{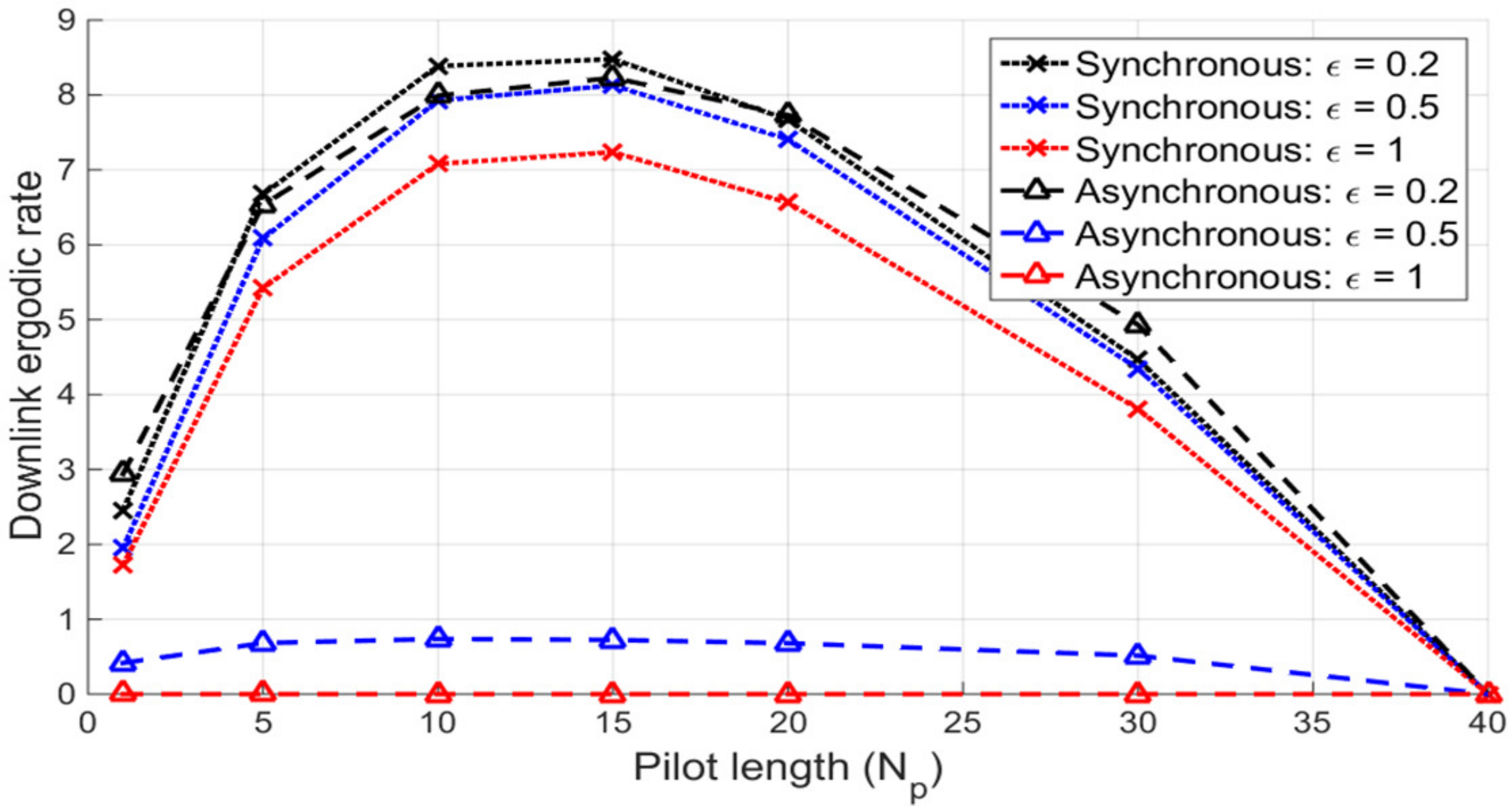}
\caption{Downlink ergodic rate of a cell versus pilot length $N_\text{p}$ for different values of the power control parameter, $\epsilon=[0.2, 0.5, 1]$, in the asynchronous and synchronous modes, $M=64$.}
\label{Np_opt}
\end{figure}

\textbf{Uplink power control parameter:} In Fig. \ref{eps_opt}, the downlink ergodic rate of a cell is shown as a function of the uplink power control parameter for different values of pilot lengths, $N_\text{p} = [5, 10, 30]$, in the asynchronous and synchronous modes. 
As seen, the uplink power control affects the downlink ergodic rate of a cell in two ways. 
First, since we consider that pilots are transmitted in the uplink direction, the uplink power control has impact on the channel estimation performance. Subsequently, the channel estimation performance affects the precoding vector in the downlink transmission. 
Second, in the asynchronous mode, 
the more is the uplink power control parameter, the more is the interference power of the users who are transmitting in the uplink direction on the downlink phase. 
Because of the inter-cellular interference in the uplink, the impact of $\epsilon$ is more perceptible in the asynchronous mode. 
Therefore, in Figs. \ref{Np_opt} and \ref{eps_opt}, we observe that using the full power control, i.e., $\epsilon=1$, leads to no downlink rate in the asynchronous mode. In contrast, in the synchronous mode, we observe that the uplink power control has little effect on the downlink rate.

\begin{figure}[t]
\centering
\includegraphics[width=9cm,height=6cm]{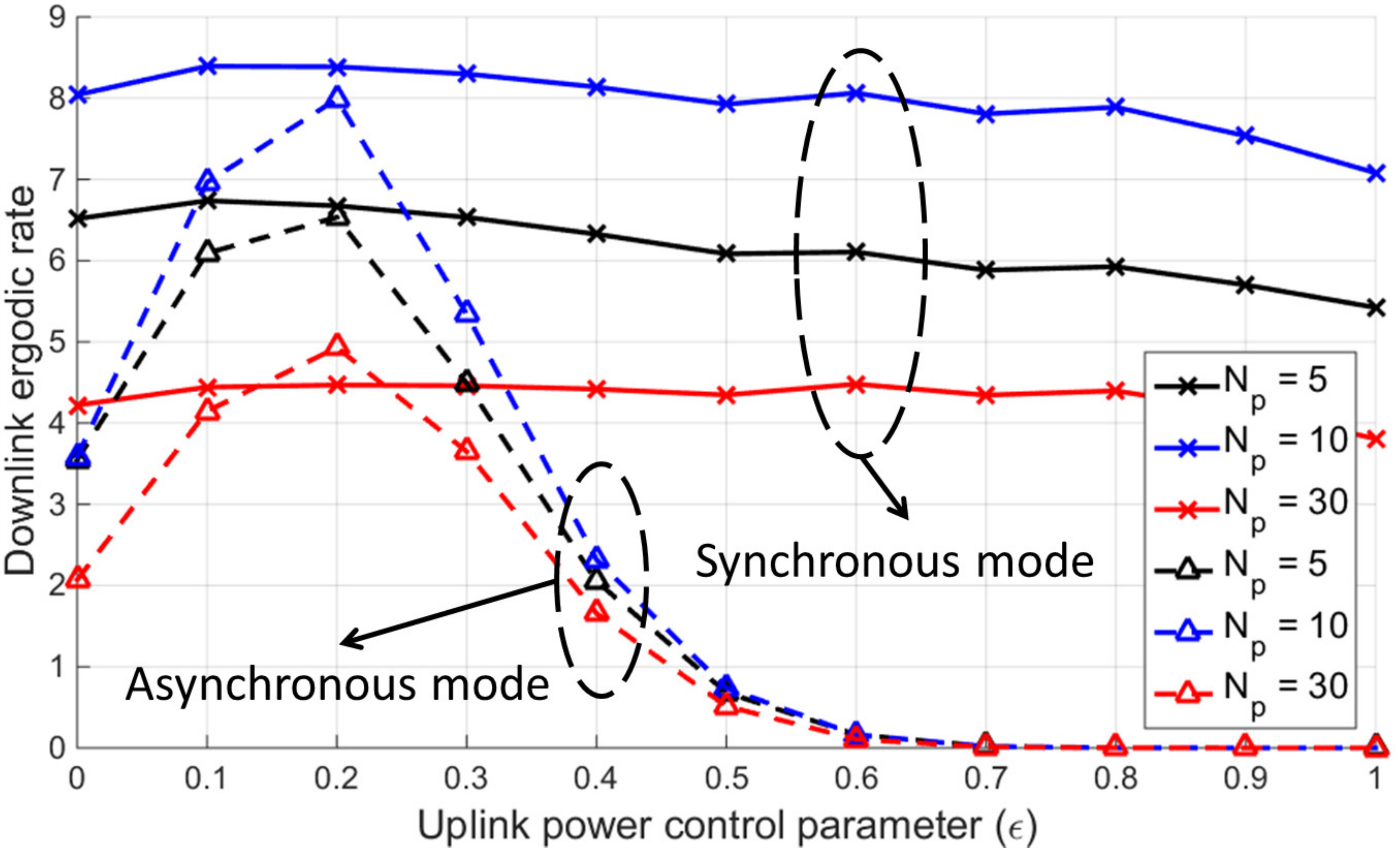}
\caption{Downlink ergodic rate of a cell versus power control parameter $\epsilon$ for different values of pilot lengths, $N_\text{p}=[5, 10, 30]$, in the asynchronous and synchronous modes, $M=64$.}
\label{eps_opt}
\end{figure}

\textbf{Comparison between the asynchronous and synchronous modes:} When we change the system mode from the synchronous mode to the asynchronous mode, we observe two effects. First, some cells have uplink interference rather than downlink interference, which can have higher or less interference than that in the synchronous mode. Second, as also mentioned in \cite{noncoop}, the pilot contamination effect is reduced. In Figs.  \ref{Np_opt} and  \ref{eps_opt}, we observe that for all considered parameter settings, except for $[\epsilon,N_\text{p}]=[0.2,30]$, lower rate is achieved in the asynchronous mode, compared to the synchronous mode. In fact, the reduction of pilot contamination dominates the addition of inter-cellular interference in the uplink direction at the point $[\epsilon,N_\text{p}]=[0.2,30]$. Hence, depending on the parameter settings, higher rate may be achieved by either synchronous or asynchronous mode.  Accordingly, synchronous assumption is not necessarily the worst case for the considered setup.

\section{Conclusion}\label{conclude_sec}
In this paper, downlink massive MIMO in the asynchronous and synchronous modes was analyzed. In the synchrnonous mode, the data transmission in all cells are synchronous. In the asynchronous mode, however, only the transmissions within each cell are synchronous and the transmissions in different cells are asynchronous. In the asynchronous mode, there are more interference sources, i.e., the interference among different base stations and the interference among different users. We used stochastic geometry tool to obtain analytical results for coverage probability and the ergodic rate of the downlink massive MIMO system in both modes. We investigated the system performance for different values of uplink power control parameter and the number of pilot symbols. In this way, we observed that there is an optimal value for the number of pilot symbols. We also saw that the asynchronous system is more sensitive to the uplink power control parameter than is the synchronous mode. We also compared the system performance in the asynchronous and synchronous modes, and observed that the synchronous assumption is not necessarily the worst case for the considered setup, and depending on the parameter settings, higher rates may be achieved by the synchronous or the asynchronous transmission modes.

\appendices
\section{SINR Calculation}\label{sinrcal}
We assume that the desired user, i.e., the $k$-th user of the $l$-th cell, konws the value of $\mathbb{E}\left \{\mathbf{h}_{llk}^T \mathbf{w}_{lk} \right \}$. Hence, the signal power is obtained as follows
\begin{eqnarray}\label{sig}
S&=&\left ( \mathbb{E}\left \{\mathbf{h}_{llk}^T \mathbf{w}_{lk} \right \} \right )^2 =\left( \mathbb{E}\left \{ \mathbf{h}_{llk}^T \frac{\mathbf{u}_{llk}^*}{\norm{\mathbf{u}_{llk}}}\right \}\right )^2
=\left(\mathbb{E}\left \{ \widetilde{\mathbf{h}}_{llk}^T \frac{\mathbf{u}_{llk}^\ast}{\norm{\mathbf{u}_{llk}}}\right \}+\mathbb{E}\left \{ \overline{\mathbf{h}}_{llk}^T \frac{\mathbf{u}_{llk}^\ast}{\norm{\mathbf{u}_{llk}}}\right \}\right)^2\nonumber \\
&\overset{(f)}{=}&
\left(\mathbb{E}\left \{
\frac{\sqrt{P_{lk}}\beta_{llk}}{\mathrm{\Delta}_{lk}}\mathbf{u}_{llk}^T\frac{\mathbf{u}_{llk}^\ast}{\norm{\mathbf{u}_{llk}}}\right \}\right)^2
=\frac{P_{lk} \beta_{llk}^2}{\mathrm{\Delta}_{lk}}\left ( \mathbb{E} \left\{ \theta \right \} \right )^2=\frac{P_{lk} \beta_{llk}^2}{\mathrm{\Delta}_{lk}} C_M^2,
\end{eqnarray}
where $(f)$ follows from the orthogonality of $\widetilde{\mathbf{h}}_{llk}$ and $\mathbf{u}_{llk}$, as well as replacing LMMSE estimation of $\overline{\mathbf{h}}_{llk}$. In addition, in (\ref{sig}), we have $\theta=\sqrt{\sum_{m=1}^{M}\left|u_m\right|^2}$ where $\forall m=1,...,M,\,u_m$ are IID random variables with distribution $\mathcal{CN}\left(0,1\right)$. Finally, according to [18, Sec. \RNum{4}], we have $C_M=\mathbb{E}\left\{\theta\right\}=\frac{\Gamma\left(M+0.5\right)}{\Gamma\left(M\right)}$.

For the first interference term in (\ref{sinr}), we have
\begin{eqnarray}\label{sd_I1}
&&\mathrm{var}\left \{ \mathbf{h}_{llk}^T \mathbf{w}_{lk} \right \}=\mathbb{E}\left |\mathbf{h}_{llk}^T \mathbf{w}_{lk} \right |^2 -\left (\mathbb{E} \left \{\mathbf{h}_{llk}^T \mathbf{w}_{lk} \right \}\right)^2\nonumber\\
&&\overset{(g)}{=}\mathbb{E}\norm{\overline{\mathbf{h}}_{llk} }^2+\mathbb{E}\left |\widetilde{\mathbf{h}}_{llk}^T \frac{\mathbf{u}_{llk}^\ast}{\norm{\mathbf{u}_{llk}}} \right |^2-\frac{P_{lk} \beta_{llk}^2}{\mathrm{\Delta}_{lk}}\left ( \mathbb{E} \left \{ \theta \right \} \right )^2\nonumber\\
&&=\frac{P_{lk} \beta_{llk}^2}{\mathrm{\Delta}_{lk}} \mathbb{E}\{\theta^2\}+\left(\beta_{llk}-\frac{P_{lk}\beta_{llk}^2}{\mathrm{\Delta}_{lk}}\right )-\frac{P_{lk} \beta_{llk}^2}{\mathrm{\Delta}_{lk}}\left (\mathbb{E} \left \{ \theta \right\} \right )^2
=\frac{P_{lk} \beta_{llk}^2}{\mathrm{\Delta}_{lk}}\left(V_M-1\right)+\beta_{llk},
\end{eqnarray}
where $(g)$ is obtained from rewriting $\mathbf{h}_{llk}$ as $\overline{\mathbf{h}}_{llk}+\widetilde{\mathbf{h}}_{llk}$. Additionally, $V_M$ denotes the varience of $\theta$. It is straightforward to show that $\mathbb{E}\{\theta^2\}=M$. Hence, we have $V_M=M-C_M^2$.

Since the channel vector of the desired cell, $\mathbf{h}_{llk}$, and the observation vectors of the co-cell users, $\forall k'\ne k\,\mathbf{u}_{llk'}$ are uncorrelated, the second term of the interference in (\ref{sinr}) is
\begin{eqnarray}\label{sd_I2}
\sum_{k'\ne k} \mathbb{E}\left|\mathbf{h}_{llk}^T\mathbf{w}_{lk'}\right|^2=\left(N_\text{p}-1\right)\beta_{llk}.
\end{eqnarray}

Then, for the third and the forth terms of the interference, we have to obtain $\mathbb{E}\left|\mathbf{h}_{jlk}^T\mathbf{w}_{jk'}\right|^2$, which is given by
\begin{eqnarray}\label{two_comp}
\mathbb{E}\left|\mathbf{h}_{jlk}^T\mathbf{w}_{jk'}\right|^2=
\mathbb{E}\left|\widetilde{\mathbf{h}}_{jlk}^T\frac{\mathbf{u}_{jjk'}^*}{\norm{\mathbf{u}_{jjk'}}}\right|^2+
\mathbb{E}\biggl|\overline{\mathbf{h}}_{jlk}^T\frac{\mathbf{u}_{jjk'}^*}{\norm{\mathbf{u}_{jjk'}}}\biggr|^2.
\end{eqnarray}

Due to the orthogonality of $\widetilde{\mathbf{h}}_{jlk}$ and $\mathbf{w}_{jk'}$, as well as the distribution of the channel estimation error in (\ref{ch_err1}), the first term of (\ref{two_comp}) is found as
\begin{eqnarray}
\mathbb{E}\left|\widetilde{\mathbf{h}}_{jlk}^T\mathbf{w}_{jk'}\right|^2
=\beta_{jlk}-
P_{lk} \beta_{jlk}^2
\begin{cases}
\frac{1}{\mathrm{\Delta}_{jk}}
&j\in S_l,\\
\sum_{k''=1}^K\frac{N_\text{p}+N_\text{u}}{N_{\text{tot}}^2}\frac{1}{\mathrm{\Delta}_{jk''}}
& j\not\in S_l.\end{cases}
\end{eqnarray}

Additionally, by replacing $\overline{\mathbf{h}}_{jlk}$ from (\ref{ch_est}), the second term of (\ref{two_comp}) is found as
\begin{eqnarray}
\mathbb{E}\left|\overline{\mathbf{h}}_{jlk}^T\mathbf{w}_{jk'}\right|^2
=P_{lk} \beta_{jlk}^2
\begin{cases}
\frac{M-1}{\mathrm{\Delta}_{jk}}\delta\left(k,k'\right)+\frac{1}{\mathrm{\Delta}_{jk}}
&j\in S_l,\\
\frac{N_\text{p}+N_\text{u}}{N_{\text{tot}}^2}\left(\frac{M-1}{\mathrm{\Delta}_{jk'}}+\sum\limits_{k''=1}^K\frac{1}{\mathrm{\Delta}_{jk''}}\right)
& j\not\in S_l.\end{cases}
\end{eqnarray}

Thus, we obtain
\begin{eqnarray}\label{sd_I3}
\mathbb{E}\left|\mathbf{h}_{jlk}^T\mathbf{w}_{jk'}\right|^2
=\beta_{jlk}+\left(M-1\right)\frac{P_{lk} \beta_{jlk}^2}{\mathrm{\Delta}_{jk'}}
\begin{cases}
\delta\left(k,k'\right)
&j\in S_l,\\
\frac{N_\text{p}+N_\text{u}}{N_{\text{tot}}^2}
& j\not\in S_l.\end{cases}
\end{eqnarray}

Finally, by doing some calculations and using the results of (\ref{sig}), (\ref{sd_I1}), (\ref{sd_I2}), and (\ref{sd_I3}), $\mathrm{SINR}^{-1}$ is achieved as in (\ref{sinr_1}).

\section{Proof of Theorem \ref{thm1}}\label{pthm1}
First, as we consider a cell with the exclusion of a central disk of radius $r_0$ around its base station, we need to obtain $\mathbb{P}\left(r>R\right|r>r_0)$. According to the Bayes' rule, $\mathbb{P}\left(r>R |r>r_0\right)$ can be expressed as
\begin{eqnarray}\mathbb{P}\left(r>R|r>r_0\right)=\frac{\mathbb{P}\left(r>R,\,r>r_0\right)}{\mathbb{P}\left(r>r_0\right)}.\end{eqnarray}

Since the desired user's serving base station is the nearest base station to the desired user, $\mathbb{P}\left(r>r_0\right)$ is obtained by the fact that there is no base station in a distance less than $r_0$ to the desired user. Thus, 
$\mathbb{P}\left(r>r_0\right)=e^{-\pi\lambda r_0^2}$. By similar arguments, $\mathbb{P}\left(r>R,\,r>r_0\right)$ can be express as $e^{-\pi\lambda \left(\max\{R,r_0\}\right)^2}$. Therefore, we have
\begin{eqnarray}\label{thm1_bef_last}
\mathbb{P}\left(r>R|r>r_0\right)=\begin{cases}\frac{e^{-\pi \lambda R^2}}{e^{-\pi \lambda r_0^2}}&r_0<R,\\1&r_0>R.\end{cases}
\end{eqnarray}
Finally, by taking derivative from (\ref{thm1_bef_last}), we obtain $f\left(r\right)$ as in (\ref{thm1_eq}).

\section{Proof of Theorem \ref{thm2}}\label{pthm2}
At first, we obtain $\mathbb{P}\left(r_1>R\right|r_1>r_0,\, r_2)$. As the serving base station of a user is the nearest one to that user, we have $r_1< r_2$ and, consequently,
\begin{eqnarray}\label{sameproof}
\mathbb{P}\left(r_1>R\right|r_1>r_0,\, r_2)=\mathbb{P}\left(r_1>R|r_0<r_1<r_2\right).
\end{eqnarray}

Based on the Bayes' rule, we have
\begin{eqnarray}
\mathbb{P}\left(r_1>R\right|r_0<r_1<r_2)=\frac{\mathbb{P}\left(r_0<r_1<r_2,\,R<r_1\right)}{\mathbb{P}\left(r_0<r_1<r_2\right)}
=\begin{cases}
1&r_0>R,\\
\frac{\mathbb{P}\left(R<r_1<r_2\right)}{\mathbb{P}\left(r_0<r_1<r_2\right)}&r_0<R<r_2,\\
0&r_2<R.
\end{cases}
\end{eqnarray}

The propability $\mathbb{P}\left(r_0<r_1<r_2\right)$ can be obtained by the fact that there is no base station within a radius of $r_0$ around the desired user, and also there will be at least one base station within the area between the two circles of radii $r_1$ and $r_2$ around the desired user (Fig. \ref{thm2fig}). Hence, we have
\begin{eqnarray}
\label{53}
\mathbb{P}\left(r_0<r_1<r_2\right)=e^{-\pi\lambda r_0^2}\left(1-e^{-\pi\lambda\left(r_2^2-r_0^2\right)}\right)
=e^{-\pi\lambda r_0^2}-e^{-\pi\lambda r_2^2}.
\end{eqnarray}

\begin{figure}[!t]
\centering
\includegraphics[width=4cm,height=4.5cm]{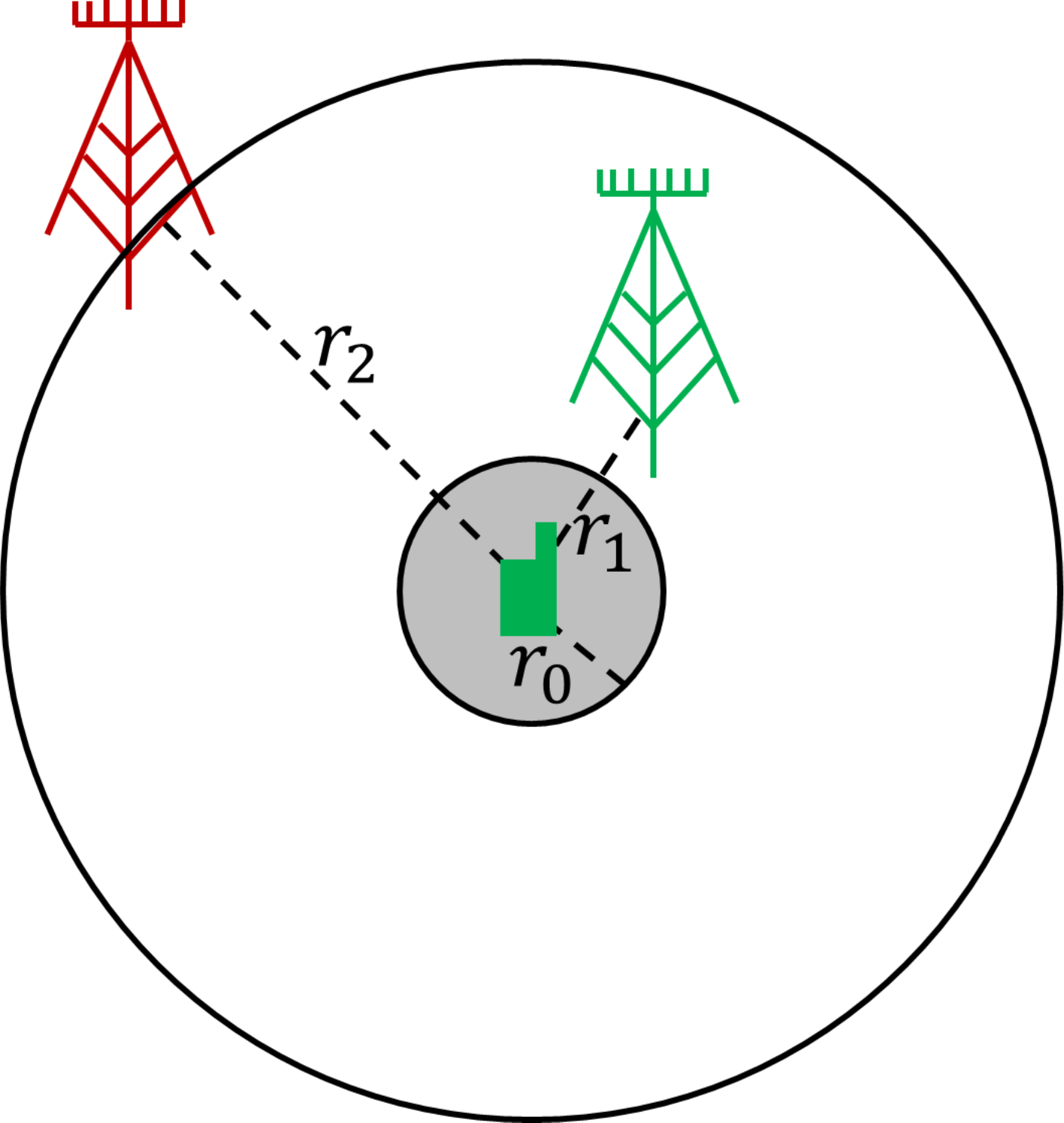}
\caption{The desired user's situation between its serving base station and another arbitrary base station.}
\label{thm2fig}
\end{figure}

Also, $\mathbb{P}\left(r_0<r_1<r_2,\,R<r_1\right)$ can be obtained  with the same procedure as in (\ref{53}). 
Then, we have
\begin{eqnarray}\label{similar}
\mathbb{P}\left(r_1>R|r_0<r_1<r_2\right)=\begin{cases}1&r_0>R,\\ \frac{e^{-\pi\lambda R^2}-e^{-\pi\lambda r_2^2}}{e^{-\pi\lambda r_0^2}-e^{-\pi\lambda r_2^2}}&r_0<R<r_2,\\0&r_2<R.\end{cases}
\end{eqnarray}

Finally, by doing some calculations, the final result is given by
\begin{eqnarray}
f(r_1|r_2)=\mathbb{P}\left(r_1|r_1>r_0,\,r_2\right)
&=&\frac{\mathrm{d}\left(1-\mathbb{P}\left(r_1>R|r_1>r_0\,r_2\right)\right)}{\mathrm{d}R}\nonumber\\
&=&\frac{2\pi\lambda r_1 e^{-\pi \lambda r_1^2}}{e^{-\pi\lambda r_0^2}-e^{-\pi\lambda r_2^2}},\, r_0<r_1<r_2.
\end{eqnarray}

\section{Proof of Theorem \ref{thm3}}\label{pthm3}
Figure \ref{thmfig1} indicates the situation of two users of different cells. According to Fig. \ref{thmfig1} and the triangle inequality, we have
\begin{eqnarray}\label{ineq1}
r_{ljk'}<r_{lkjk'}+r_{llk},\quad r_{jlk}<r_{lkjk'}+r_{jjk'}.
\end{eqnarray}

\begin{figure}[!t]
\centering
\includegraphics[width=5cm,height=4.5cm]{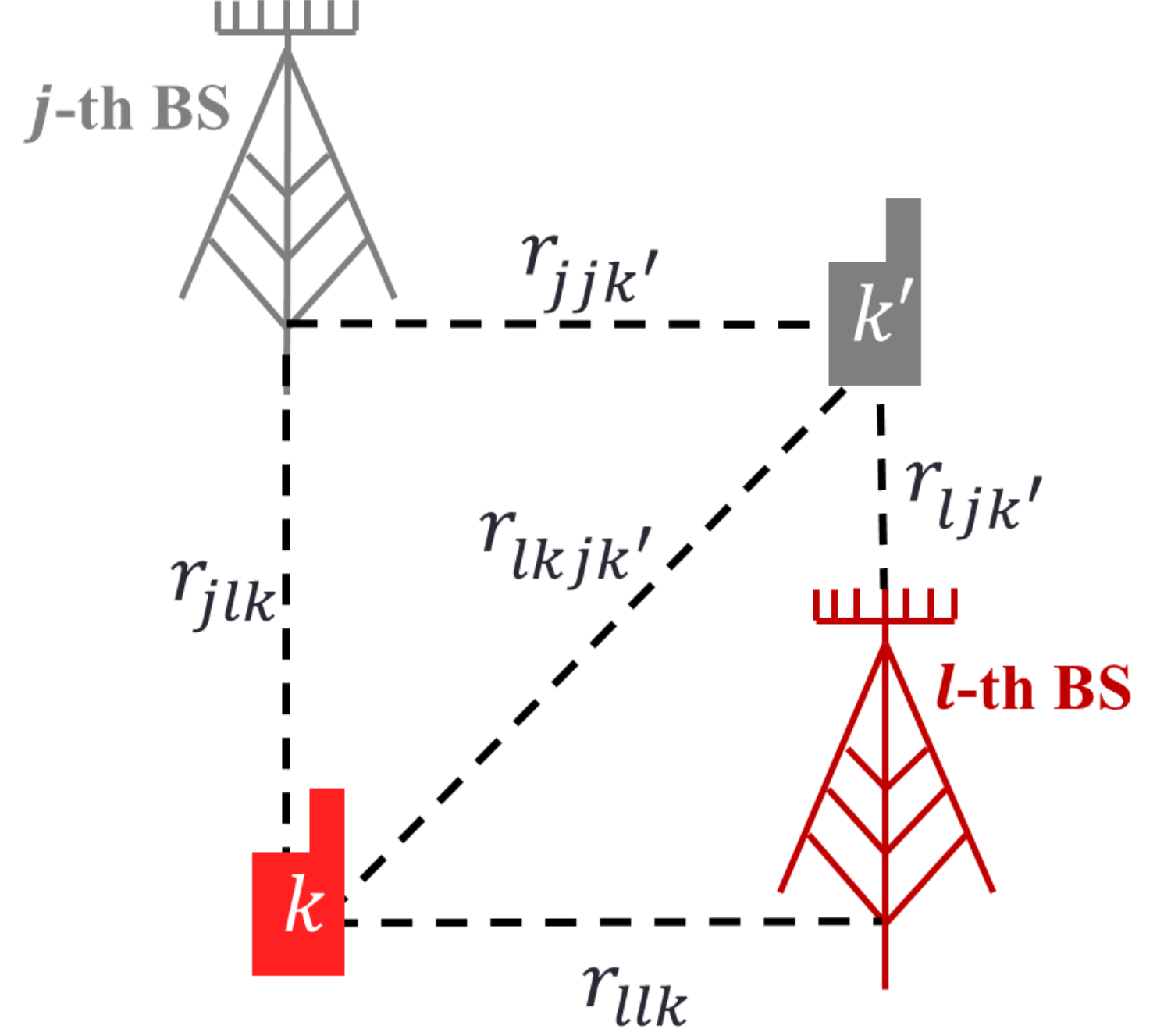}
\caption{The situation of two users of different cells and their serving base stations.}
\label{thmfig1}
\end{figure}

Also, since each user is served by the nearest base station, the following inequalities hold
\begin{eqnarray}\label{ineq2}
r_{jjk'}<r_{ljk'},\quad r_{llk}<r_{jlk}.
\end{eqnarray}

From the inequalities (\ref{ineq1}) and (\ref{ineq2}) as well as the fact that the distance between a user and its serving base station is greater than $r_0$, it is concluded that
\begin{eqnarray}\max{\left(r_0,\,r_{llk}-r_{lkjk'}\right)}<r_{jjk'}<r_{llk}+r_{lkjk'}.\end{eqnarray}

Therefore, we have
\begin{eqnarray}\label{thm3equ1}
\mathbb{P}\left(r_{jjk'}=s|r_{lkjk'}=r,\,r_{llk}=x\right)
&=&\mathbb{P}\left(r_{jjk'}=s|\max{\left(r_0,\,x-r\right)}<r_{jjk'}<x+r\right)\nonumber\\
&=&\mathbb{P}\left(r_{jjk'}=s|R_1<r_{jjk'}<R_2\right).\end{eqnarray}

The expression $\mathbb{P}\left(r_{jjk'}>s|R_1<r_{jjk'}<R_2\right)$ and (\ref{sameproof}) are similar. Therefore, by following the same procedure as in (\ref{similar}), we have
\begin{eqnarray}
\mathbb{P}\left(r_{jjk'}>s|R_1<r_{jjk'}<R_2\right)=
\begin{cases}
1&R_1>s,\\
 \dfrac{e^{-\pi\lambda s^2}-e^{-\pi\lambda R_2^2}}{e^{-\pi\lambda R_1^2}-e^{-\pi\lambda R_2^2}}&R_1<s<R_2,\\
 0&R_2<s.
 \end{cases}
\end{eqnarray}

Finally, we obtain
\begin{eqnarray}
\mathbb{P}\left(r_{jjk'}=s|r_{lkjk'}=r,\,r_{llk}=x\right)
&=&\frac{\mathrm{d}}{\mathrm{d}s}\left(1-\mathbb{P}\left(r_{jjk'}>s|r_{lkjk'}=r,\,r_{llk}=x\right)\right)\nonumber\\
&=&\frac{2\pi\lambda s e^{-\pi\lambda s^2}}{e^{-\pi\lambda R_1^2}-e^{-\pi\lambda R_2^2}},\quad R_1<s<R_2.
\end{eqnarray}

\section{Gamma Random Variable CDF Approximation}\label{gam_appr}
Consider that $A$ is a random variable. 
We can present $\mathbb{P}\left(g>A\right)$ by taking expectation as follows.
\begin{eqnarray}
\label{exp_form}
\mathbb{P}\left(g>A\right)&
=&1-\int \mathbb{P}\left(g<A|A\right)\mathbb{P}\left(A\right)\, \mathrm{d}A
=1-\mathbb{E}_A\left\{F_g\left(A\right)\right\}\nonumber\\
&\overset{(h)}{\approx}&\sum_{n=1}^N \left(-1\right)^{n+1} \binom{N}{n} \mathbb{E}_{A}\left\{e^{-\eta nA}\right\},
\end{eqnarray}
where $F_g\left(A\right)$ is the cumulative distribution function (CDF) of the Gamma random variable, and $\eta=N\left(N!\right)^{-\frac{1}{N}}$. In $(h)$, based on the alzer inequality $ F_g\left(A\right)\leq\left(1-e^{-\eta A}\right)^N$ in [20, Appendix A] and \cite{alzer2}, a tight CDF approximation for the Gamma random variable, $ F_g\left(A\right)\approx\left(1-e^{-\eta A}\right)^N$, is used.
\section{Coverage Probability Calculation}\label{cov_cal}
First, we obtain $Q_1$. In the asynchronous mode, $S_{l}$ only includes ${l}$. Thus, we have
\begin{eqnarray}
Q_1^\text{asyn}
=\frac{N_\text{p}+N_\text{u}}{N_\text{tot}^2}N_\text{p}\mathbb{E}_{\{r_{jjk''},r_{ljk''}|j\in\mathrm{\Phi}_\text{b}\backslash\{l\}\}\big|r_{llk}=x}\left\{\sum_{j\ne l}r_{jjk''}^{\alpha\epsilon}r_{ljk''}^{-\alpha}\right\}
+\frac{P_\text{d}N_\text{p}N_\text{d}Q_2^\text{asyn}}{P_\text{u}\omega^{-\epsilon}N_\text{tot}^2}+\frac{\sigma^2\omega^{\epsilon-1}}{N_\text{p}P_\text{u}},
\end{eqnarray}
where $\mathbb{E}\left\{\sum_{j\not\in S_l}r_{lj}^{-\alpha}\right\}$ is replaced by 
$Q_2^\text{asyn}$. 
Similarly, in the synchronous mode, we have
\begin{eqnarray}
Q_1^\text{syn}=\mathbb{E}_{\{r_{jjk},r_{ljk}|j\in\mathrm{\Phi}_\text{b}\backslash\{l\}\}\big|r_{llk}=x}\left\{\sum_{j\ne l} r_{jjk}^{\alpha \epsilon} r_{ljk}^{-\alpha}\right\}
+\frac{\sigma^2}{N_\text{p}P_\text{u}\omega^{1-\epsilon}}.
\end{eqnarray}

In both modes, we should obtain $\mathbb{E}\left\{\sum_{j\ne l} r_{jjk}^{\alpha \epsilon} r_{ljk}^{-\alpha}|r_{llk}=x\right\}$ to find out $Q_1$. Then,
\begin{eqnarray}
\label{67}
&&\mathbb{E}_{\{r_{jjk},r_{ljk}|j\in\mathrm{\Phi}_\text{b}\backslash\{l\}\}\big|r_{llk}=x}\left\{\sum_{j\ne l} r_{jjk}^{\alpha \epsilon} r_{ljk}^{-\alpha} \right\}
=\mathbb{E}_{\mathrm{\Phi}_{lk}^\text{u}}\left\{\sum_{j\ne l} \mathbb{E}_{r_{jjk}|r_{ljk}}\left\{r_{jjk}^{\alpha \epsilon}\right\}r_{ljk}^{-\alpha}\right\}\nonumber\\
&&\overset{(i)}{=}\mathbb{E}_{\mathrm{\Phi}_{lk}^\text{u}}\left\{\sum_{j\ne l} r_{ljk}^{-\alpha}\int_{r_0}^{r_{ljk}}\frac{ r_{jjk}^{\alpha \epsilon}2\pi \lambda r_{jjk}e^{-\pi \lambda r_{jjk}^2}}{e^{-\pi \lambda r_0^2}-e^{-\pi \lambda r_{ljk}^2}}\,\mathrm{d}r_{jjk}\right\}\nonumber\\
&&\overset{(j)}{=}\mathbb{E}_{\mathrm{\Phi}_{lk}^\text{u}}\left\{\sum_{j\ne l}\int_{\pi \lambda r_0^2}^{\pi \lambda r_{ljk}^2}\frac{R_\text{e}^{\alpha \epsilon} r_{ljk}^{-\alpha}s^{\frac{\alpha\epsilon}{2}}e^{-s}}{e^{-\pi \lambda r_0^2}-e^{-\pi \lambda r_{ljk}^2}}\,\mathrm{d}s\right\}\nonumber\\
&&\overset{(k)}{=}R_\text{e}^{\alpha \epsilon}\int_{R_\text{e}}^\infty \int_{\pi \lambda r_0^2}^{\pi \lambda r^2}
\frac{r^{-\alpha}s^{\frac{\alpha\epsilon}{2}}e^{-s}2\pi \lambda r}{e^{-\pi \lambda r_0^2}-e^{-\pi \lambda r^2}}\,\mathrm{d}s\,\,\mathrm{d}r
\overset{(l)}{=}R_\text{e}^{-\alpha\left(1-\epsilon\right)}\int_{\pi \lambda R_\text{e}^2}^\infty \int_{\pi \lambda r_0^2}^t 
\frac{t^\frac{-\alpha}{2}s^{\frac{\alpha\epsilon}{2}}e^{-s}}{e^{-\pi \lambda r_0^2}-e^{-t}}\,\mathrm{d}s\,\mathrm{d}t,
\end{eqnarray}
where in $(i)$, the distance distribution given in Theorem \ref{thm2} is used. In $(j)$, we use $R_\text{e}=\left(\pi\lambda\right)^{-\frac{1}{2}}$ and the variable transform $s=\pi\lambda r_{jjk}^2$. $(k)$ comes from the Campbell's theory and the exclusion ball model. In $(l)$, the variable changes as $t=\pi\lambda r^2$.

Besides the expression in (\ref{67}), obtaining $Q_1^\text{asyn}$ involves the calculation of $Q_2^\text{asyn}$. From (\ref{Q_2}), in the synchronous mode, $Q_2^\text{syn}$ is zero. In the asynchronous mode, we get
\begin{eqnarray}
Q_2^\text{asyn}&=&\mathbb{E}_{\{r_{lj}|j\in\mathrm{\Phi}_\text{b}\backslash \{l\}\}\big|r_{llk}=x}\left\{\sum\limits_{j\ne l}r_{lj}^{-\alpha}\right\}\approx
\mathbb{E}_{\{r_{lj}|j\in\mathrm{\Phi}_\text{b}\backslash \{l\}\}}\left\{\sum\limits_{j\ne l}r_{lj}^{-\alpha}\right\}\nonumber\\
&\overset{(m)}{\approx}&\int_{R_\text{e}}^\infty r^{1-\alpha}2\pi\lambda \,\mathrm{d}r
\overset{(n)}{=}\frac{2R_\text{e}^{-\alpha}}{\alpha-2},
\end{eqnarray}
where in $(m)$, for simplicity, we consider that the interfering base stations are distributed based on the exclusion ball model. Thus, by using the Campbell's theory, $(m)$ is found. In addition, in $(n)$, we assume $R_\text{e}=\left(\pi\lambda\right)^{-\frac{1}{2}}$.

Next, it is clear that $Q_3^\text{sync}=0$ in the synchronous mode. For the asynchronous mode, it is obtained as
\begin{eqnarray}\label{difficult}
Q_3^\text{asyn}&=&\mathbb{E}_{\{r_{jjk'},r_{lkjk'}|j\in\mathrm{\Phi}_\text{b}\backslash\{l\},k'=1,2,...,K\}\big|r_{llk}=x}\left\{\sum_{j\ne l}\sum_{k'} r_{jjk'}^{\alpha\epsilon}r_{lkjk'}^{-\alpha}\right\}\nonumber\\
&=&N_\text{p}\mathbb{E}_{\{r_{jjk'},r_{lkjk'}|j\in\mathrm{\Phi}_\text{b}\backslash\{l\}\}\big|r_{llk}=x}\left\{\sum_{j\ne l}r_{jjk'}^{\alpha\epsilon}r_{lkjk'}^{-\alpha}\right\}\nonumber\\
&=&N_\text{p}\mathbb{E}_{\mathrm{\Phi}_{jk'}^\text{u}|r_{llk}=x}\left\{\sum_{j\ne l}\mathbb{E}_{r_{jjk'}|r_{lkjk'},r_{llk}=x}\left\{r_{jjk'}^{\alpha\epsilon}\right\}r_{lkjk'}^{-\alpha}\right\}\nonumber\\
&\overset{(p)}{=}&N_\text{p}\int_{R_\text{e}}^\infty \int_0^{2\pi}\mathbb{E}_{r_{jjk'}|r_{lkjk'}=r_1,r_{llk}=x}\left\{r_{jjk'}^{\alpha\epsilon}\right\}r_1^{-\alpha}\lambda r\,\mathrm{d}\theta\mathrm{d}r\nonumber\\
&\overset{(q)}{=}&N_\text{p}\int_{R_\text{e}}^\infty \int_0^{2\pi}
A\left(x\right)\int_{-\pi\lambda \left(\max{\left(r_0,\,x-r_1\right)}\right)^2}^{-\pi\lambda \left(x+r_1\right)^2}s^{\frac{\alpha\epsilon}{2}}e^{-s}\,\mathrm{d}s\,
r_1^{-\alpha}\lambda r\,\mathrm{d}\theta\,\mathrm{d}r,
\end{eqnarray}
where $A\left(x\right)=\dfrac{\left(\pi\lambda\right)^{\frac{\alpha\epsilon}{2}}}{e^{-\pi\lambda \left(
\max{\left(r_0,\,x-r_1\right)}
\right)^2}-e^{-\pi\lambda \left(x+r_1\right)^2}}$. 
The expression in $(p)$ is obtained from the Campbell's theory and the exclusion ball model, and in $(q)$, we use the distance distribution given in Theorem \ref{thm3}. Note that, $r_1$ can be expressed based on $r$, $x$, and $\theta$ as $r_1^2=r^2+x^2-2rx\cos\theta$ by using the law of cosines (Fig. \ref{udistance}).
\begin{figure}[t]
\centering
\includegraphics[width=6cm,height=5cm]{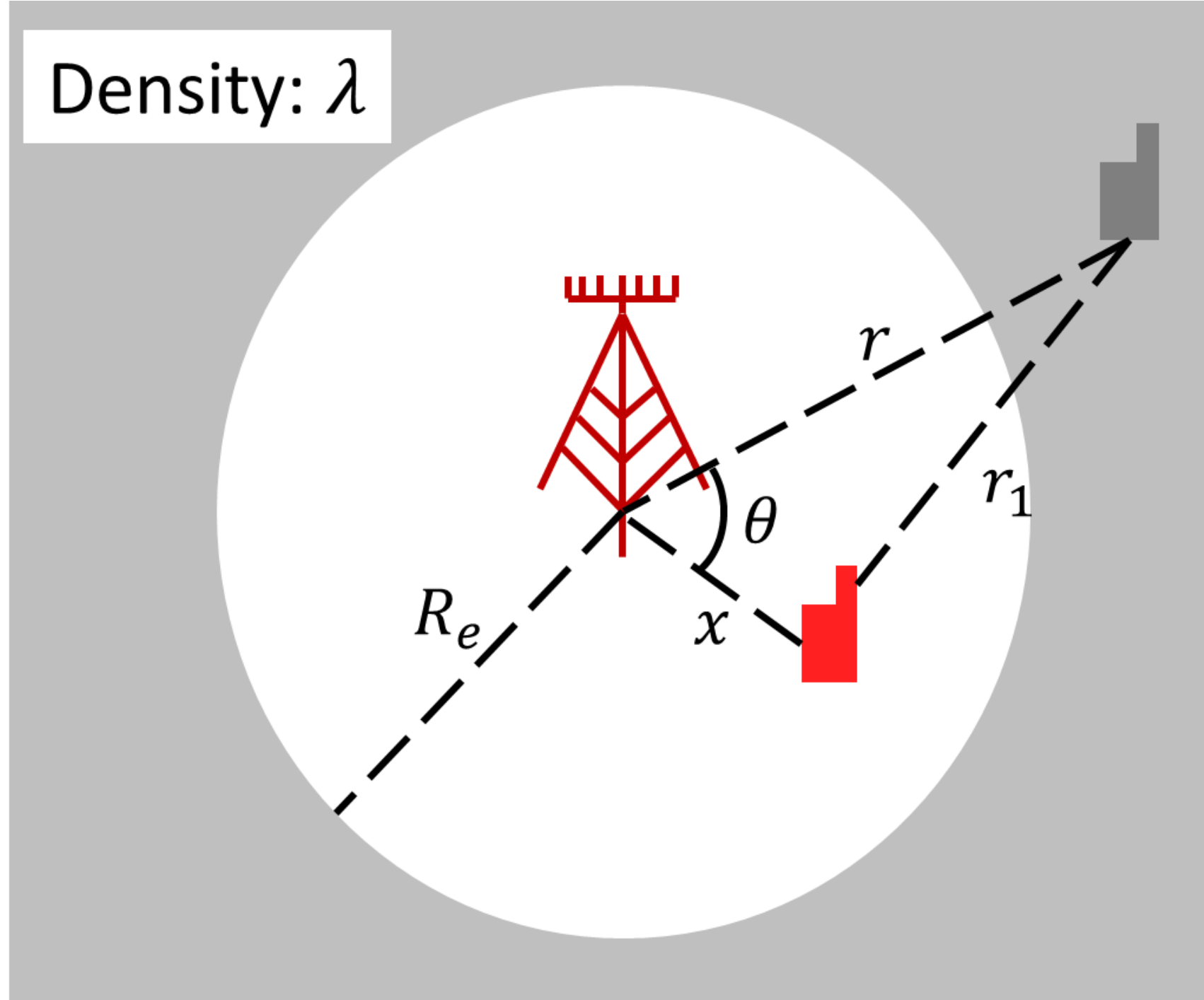}
\caption{The relation between $r_1$ and $r$ can be obtained from the law of cosines.}
\label{udistance}
\end{figure}
In (\ref{difficult}), we deal with a triple integral which does not have close form expression and should be solved numerically. Therefore, such an expression is very difficult to be computed. For simplicity, we use the approximation, $r_{lkjk'} \approx r_{ljk'}$, and from (\ref{67}), we obtain
\begin{eqnarray}\label{easy}
Q_3^\text{asyn}
=N_\text{p}\left(\pi \lambda\right)^{\frac{\alpha}{2}\left(1-\epsilon\right)}
\int_{\pi \lambda R_\text{e}^2}^\infty t^\frac{-\alpha}{2}\int_{\pi \lambda r_0^2}^t s^{\frac{\alpha\epsilon}{2}}\frac{e^{-s}}{e^{-\pi \lambda r_0^2}-e^{-t}}\,\mathrm{d}s\,\mathrm{d}t.
\end{eqnarray}

The next step to find the coverage probability is to obtain $\mathcal{E}_1\left(T,n,x\right)$ and $\mathcal{E}_2\left(T,n,x\right)$. In both modes, $\mathcal{E}_1\left(T,n,x\right)$ can be expressed as bellow.
\begin{eqnarray}\label{72}
&&\mathcal{E}_1\left(T,n,x\right)=\mathbb{E}_{\mathrm{\Phi}_\text{b}\backslash\{l\}|r_{llk}=x}\left\{\exp{\left(\sum_{j\ne l}\left(B\left(T,n,x\right)r_{jlk}^{-\alpha}+C\left(T,n,x\right)r_{jlk}^{-2\alpha}\right)\right)}\right\}\nonumber\\
&&\overset{(r)}{=}\exp\left(\int_{\pi\lambda x^2}^\infty \left[\exp\left(
B\left(T,n,x\right)R_\text{e}^{-\alpha}t^\frac{-\alpha}{2}+C\left(T,n,x\right)R_\text{e}^{-2\alpha}t^{-\alpha}
\right)-1\right]
\,\mathrm{d}t\right),
\end{eqnarray}
where $B\left(T,n,x\right)$ and $C\left(T,n,x\right)$ are defined in the following equations
\begin{eqnarray}
\label{b_asyn}
B^\text{asyn}\left(T,n,x\right)&=&-\eta nT\frac{N_\text{p}}{C_M^2}\frac{N_\text{d}^2}{N_\text{tot}^2}\left(x^\alpha+x^{\alpha \left(2-\epsilon\right)}Q_1^\text{asyn}\right),\\
\label{b_syn}
B^\text{syn}\left(T,n,x\right)&=&-\eta nT\frac{N_\text{p}}{C_M^2}\left(x^\alpha+x^{\alpha \left(2-\epsilon\right)}Q_1^\text{syn}\right),\\
\label{c_asyn}
C^\text{asyn}\left(T,n,x\right)&=&-\eta nT\frac{M-1}{C_M^2}\frac{N_\text{p}N_\text{d}^2\left(N_\text{p}+N_\text{u}\right)}{N_\text{tot}^4}x^{2\alpha},\\
\label{c_syn}
C^\text{syn}\left(T,n,x\right)&=&-\eta nT\frac{M-1}{C_M^2}x^{2\alpha}.
\end{eqnarray}
In the expression (\ref{72}), $(r)$ is achieved from Campbell's theory as well as the exclusion ball model and using the variable transform $t=\pi\lambda r^2$.

$\mathcal{E}_2^\text{asyn}\left(T,n,x\right)$ in the asynchronous mode is obtained as
\begin{eqnarray}
\label{E2_async}
&\mathcal{E}_2^\text{asyn}&\left(T,n,x\right)
=\mathbb{E}_{\{r_{jjk'},r_{ljk'}|j\in\mathrm{\Phi}_\text{b}\backslash\{l\},k'=1,...,K\}\big|r_{llk}=x}\left\{\exp{\left(D^\text{asyn}\left(T,n,x\right)\sum_{j\ne l}\sum_{k'}r_{jjk'}^{\alpha\epsilon}r_{ljk'}^{-\alpha}
\right)}\right\}\nonumber\\
&&=\left(\mathbb{E}_{\{r_{jjk'},r_{ljk'}|j\in\mathrm{\Phi}_\text{b}\backslash\{l\}\}\big|r_{llk}=x}\left\{\prod_{j\ne l}\exp{\left(D^\text{asyn}\left(T,n,x\right)
r_{jjk'}^{\alpha\epsilon}r_{ljk'}^{-\alpha}
\right)}\right\}\right)^{N_\text{p}}\nonumber\\
&&=\left(\mathbb{E}_{\mathrm{\Phi}_{jk'}^\text{u}|r_{llk}=x}\left\{\prod_{j\ne l}\mathbb{E}_{r_{jjk'}|r_{ljk'},r_{llk}=x}\left\{\exp{\bigl(D^\text{asyn}\left(T,n,x\right)
r_{jjk'}^{\alpha\epsilon}r_{ljk'}^{-\alpha}
\bigr)}\right\}
\right\}\right)^{N_\text{p}}\nonumber\\
&&\overset{(s)}{=}\biggl(\mathbb{E}_{\mathrm{\Phi}_{jk'}^\text{u}|r_{llk}=x}\biggl\{\prod_{j\ne l}
\int_{r_0}^{r_{ljk'}} \frac{2\pi\lambda r e^{-\pi\lambda r^2}}{e^{-\pi\lambda r_0^2}-e^{-\pi\lambda r_{ljk'}^{-\alpha} }}\exp{\bigl(D^\text{asyn}\left(T,n,x\right)
r^{\alpha\epsilon}r_{ljk'}^{-\alpha}
\bigr)}\,\mathrm{d}r
\biggr\}\biggr)^{N_\text{p}}\nonumber\\
&&\overset{(t)}{=}\biggl(\mathbb{E}_{\mathrm{\Phi}_{jk'}^\text{u}|r_{llk}=x}\biggl\{\prod_{j\ne l}
\int_{\pi\lambda r_0^2}^{\pi\lambda r_{ljk'}^2}\frac{e^{-s}\exp{\bigl(D^\text{asyn}\left(T,n,x\right)\left(\pi\lambda\right)^{-\frac{\alpha\epsilon}{2}}s^{\frac{\alpha\epsilon}{2}}r_{ljk'}^{-\alpha}
\bigr)}}{e^{-\pi\lambda r_0^2}-e^{-\pi\lambda r_{ljk'}^{-\alpha} }}\,\mathrm{d}s
\biggr\}\biggr)^{N_\text{p}}\nonumber\\
&&\overset{(u)}{=}\exp\biggl(
\int_{\pi\lambda R_\text{e}^2}^\infty 
\int_{\pi \lambda r_0^2}^t \frac{N_\text{p}e^{-s}\left[\exp\bigl(D^\text{asyn}\left(T,n,x\right)R_\text{e}^{-\alpha\left(1-\epsilon\right)}s^\frac{\alpha\epsilon}{2} t^{-\frac{\alpha}{2}}
\bigr)-1\right]}{e^{-\pi\lambda r_0^2}-e^{-t }}
\,\mathrm{d}s\,\mathrm{d}t
\biggr),
\end{eqnarray}
where in $(s)$, we use the distance distribution given in Theorem \ref{thm2}. Then, $(t)$ comes from the variable transform $s=\pi\lambda r_{jjk}^2$. In $(u)$, the Campbell's theory and the exclusion ball model are used. Also, the variable changes as $t=\pi\lambda r^2$.

By following the same procedure as in (\ref{E2_async}) for $\mathcal{E}_2^\text{asyn}$, we have
\begin{eqnarray}
&&\mathcal{E}_2^\text{syn}\left(T,n,x\right)=\mathbb{E}_{\{r_{jjk},r_{ljk}|j\in\mathrm{\Phi}_\text{b}\backslash\{l\}\}\big|r_{llk}=x}\left\{\exp
{\left(D^\text{syn}\left(T,n,x\right)\sum_{j\ne l}
r_{jjk}^{\alpha\epsilon}r_{ljk}^{-\alpha}
\right)}
\right\}\nonumber\\
&&=\exp\biggl(
\int_{\pi\lambda R_\text{e}^2}^\infty 
\int_{\pi \lambda r_0^2}^t \frac{e^{-s}}{e^{-\pi\lambda r_0^2}-e^{-t }}
\left[\exp{\bigl(D^\text{syn}\left(T,n,x\right)R_\text{e}^{-\alpha\left(1-\epsilon\right)}s^\frac{\alpha\epsilon}{2} t^{-\frac{\alpha}{2}}
\bigr)}-1\right]\,\mathrm{d}s
\,\mathrm{d}t\biggr),
\end{eqnarray}
where $D\left(T,n,x\right)$ in both modes are defined in the following
\begin{eqnarray}
\label{d_asyn}
D^\text{asyn}\left(T,n,x\right)&=&-\frac{\eta nT}{C_M^2}\frac{N_\text{p}+N_\text{u}}{N_\text{tot}^2}x^{\alpha\left(1-\epsilon\right)}\left(N_\text{p}+\frac{\sigma^2}{P_\text{d}\omega}x^\alpha\right),\\
\label{d_syn}
D^\text{syn}\left(T,n,x\right)&=&-\frac{\eta nT}{C_M^2}x^{\alpha\left(1-\epsilon\right)}\left(N_\text{p}+\frac{\sigma^2}{P_\text{d}\omega}x^\alpha\right).
\end{eqnarray}


\begin{thebibliography}{99}
\bibitem{will}
J. G. Andrews, S. Buzzi, W. Choi, S. V. Hanly, A. Lozano, A. C. K. Soong, and J. C. Zhang, ``What will 5G be?,'' \textit{IEEE J. Sel. Areas Commun.,} vol. 32, no. 6, pp. 1065-1082, June. 2014.
\bibitem{noncoop}
T. L.Marzetta, ``Noncooperative cellular wireless with unlimited numbers of base station antennas,'' \textit{IEEE Trans. Wireless Commun.,} vol. 9, no. 11, pp. 3590-3600, Aug. 2010.
\bibitem{three}
E. Bj{\"o}rnson, E. G. Larsson, ``Three practical aspects of massive MIMO: Intermittent user activity pilot synchronism and asymmetric deployment'', \textit{IEEE Globecom Workshops,} CA, USA, Dec. 2015.
\bibitem{khan}
A. Khansefid, and H. Minn, ``Achievable Downlink Rates of MRC and ZF Precoders in Massive MIMO With Uplink and Downlink Pilot Contamination,'' \textit{IEEE Trans. Wireless Commun.,} vol. 63, no. 12, pp. 4849-4864, Dec. 2015.
\bibitem{DLcapacity}
J. C. Shen, J. Zhang, and K. B. Letaief, ``Downlink user capacity of massive MIMO under pilot contamination,'' \textit{IEEE Trans. Wireless Commun.,} vol. 14, no. 6, pp. 3183-3193, Jun. 2015.
\bibitem{precoding}
J. Jose, A. Ashikhmin, T. Marzetta, and S. Vishwanath, ``Pilot contamination and precoding in multi-cell TDD systems,'' \textit{IEEE Trans. Wireless Commun.,} vol. 10, no. 8, pp. 2640-2651, Aug. 2011.
\bibitem{Haeng}
M. Haenggi, ``Stochastic Geometry for Wireless Networks,'' \textit{Cambridge University Press,} 2012.
\bibitem{Azimi1}
S. M. Azimi-Abarghouyi, B. Makki, M. Haenggi, M. Nasiri-Kenari, and T. Svensson, ``Stochastic Geometry Modeling and Analysis of Single- and Multi-Cluster Wireless Networks,'' \textit{IEEE Trans. Commun.,} to appear, May. 2018. [Online]. Available: https://arxiv.org/abs/1712.08784
\bibitem{Azimi2}
S. M. Azimi-Abarghouyi, B. Makki, M. Haenggi, M. Nasiri-Kenari, and T. Svensson, ``Coverage Analysis of Finite Cellular Networks: A Stochastic Geometry Approach'', \textit{2018 Iran Workshop on Communication and Information Theory (IWCIT'18),} Tehran, Iran, Apr. 2018.
\bibitem{tract}
J. G. Andrews, F. Baccelli, and R.K. Ganti, ``A tractable approach to coverage and rate in cellular networks,'' \textit{IEEE Trans. Commun.,} vol. 59, no. 11, pp. 3122-3134, Nov. 2011.
\bibitem{emil}
E. Bj{\"o}rnson, L. Sanguinetti, and M. Kountouris, ``Deploying dense networks for maximal energy efficiency: small cells meet massive MIMO,'' \textit{IEEE J. Sel. Areas Commun.,} vol. 34, no. 4, pp. 832-847, April. 2016.
\bibitem{gil}
M. Di Renzo, P. Guan, ``Stochastic geometry modeling of coverage and rate of cellular networks using the Gil-Pelaez inversion theorem'', \textit{IEEE Commun. Lett.,} vol. 18, no. 9, pp. 1575-1578, Sep. 2014.
\bibitem{heath}
T Bai, and R. W. Heath, ``Analyzing uplink SINR and rate in massive MIMO systems using stochastic geometry,'' \textit{IEEE Trans. Commun.,} vol. 64, no. 11, pp. 4592-4606, Nov. 2016.
\bibitem{Madh}
P. Madhusudhanan, X. Li, Y. Liu, T. Brown, ``Stochastic Geometric Modeling and Interference Analysis for Massive MIMO Systems,'' \textit{SpaSWiN Workshop, WiOpt 2013,} Tsukuba, Japan, May 2013.
\bibitem{interference}
N. Liang, W. Zhang, and C. Shen, ``An uplink interference analysis for massive MIMO systems with MRC and ZF receivers,'' \textit{in Proc. WCNC'2015,} LA, USA, pp. 310-315, March 2015.
\bibitem{LTE}
W. Xiao, R. Ratasuk, A. Ghosh, R. Love, Y. Sun, R. Nory, ``Uplink power control, interference coordination and resource allocation for 3GPP E-UTRA,''\textit{ in Proc. of IEEE VTC'2006,} Montreal, Canada, Sep. 2006, pp. 1-5.
\bibitem{gamma_limit}
R. Aris, \textit{Mathematical Modeling: A Chemical Engineer's Perspective.} Academic Press, 1999.
\bibitem{table_ref}
T. D. Novlan, H. S. Dhillon, J. G. Andrews, ``Analytical modeling of uplink cellular networks'', \textit{IEEE Trans. Wireless Commun.,} vol. 12, no. 6, pp. 2669-2679, Jun. 2013.
\bibitem{CM}
J. Jose, A. Ashikhmin, T. L. Marzetta, and S. Vishwanath, ``Pilot contamination and precoding in multi-cell TDD systems,'' \textit{IEEE Trans. Wireless Commun.,} vol. 10, no. 8, pp. 2640-2651, Aug. 2011.
\bibitem{alzer1}
T Bai, and R. W. Heath Jr, ``Coverage and rate analysis for millimeter-wave cellular networks,'' \textit{IEEE Trans. Wireless Commun.,} vol. 14, no. 2, pp. 1100-1114, Feb. 2015.
\bibitem{alzer2}
H. Alzer, ``On some inequalities for the incomplete Gamma function,'' \textit{Mathematics of Computation.,} vol. 66, no. 218, pp. 771-778, Nov. 1997.
\bibitem{campbell}
F. Baccelli and B. Blaszczyszyn, \textit{Stochastic geometry and wireless networks. Volume I: theory.} NOW publishers, 2009.

\end{thebibliography}
\end{document}